\def\qu#1 {\fbox {\footnote {\ }}\ \footnotetext { From Qu: {\color{red}#1}}}
\def\hqu#1 {}
\def\kq#1 {\fbox {\footnote {\ }}\ \footnotetext { From KangQuan: {\color{blue}#1}}}
\def\hkq#1 {}
\newtheorem{Th}{Theorem}
\newtheorem{Prop}[Th]{Proposition}
\newtheorem{Lemma}[Th]{Lemma}
\newtheorem{Def}[Th]{Definition}
\newtheorem{Conj}[Th]{Conjecture}
\newcommand{\tr}{{\rm Tr}}
\newcommand{\gf}{{\mathbb F}}
\newcommand{\figcaption}{\def\@captype{figure}\caption}
\newcommand{\tabcaption}{\def\@captype{table}\caption}
\begin{document}
	\title{Cryptographically Strong Permutations \\ from the Butterfly Structure}
	\author{{ Kangquan Li, Chunlei Li, Tor Helleseth and Longjiang Qu}
		\thanks{\noindent Kangquan Li is with the College of Liberal Arts and Sciences,
			National University of Defense Technology, Changsha, 410073, China and is currently a visiting Ph.D. student at the Department of Informatics, University of Bergen, Bergen N-5020, Norway.
			Chunlei Li and Tor Helleseth are with the Department of Informatics, University of Bergen, Bergen N-5020, Norway.
			Longjiang Qu is with the College of Liberal Arts and Sciences,
			National University of Defense Technology, Changsha, 410073, China, and is also with
			the State Key Laboratory of Cryptology, Beijing, 100878, China.
			
			\smallskip
			
			\textbf{Emails}: 			likangquan11@nudt.edu.cn,  chunlei.li@uib.no, tor.helleseth@uib.no, 
			ljqu\_happy@hotmail.com
		}
	}
	\maketitle{}

\begin{abstract}
%
    	In Crypto'2016 Perrin et al. discovered the butterfly structure that contains the Dillon APN permutation
    of six variables. The novel idea of this structure is the representation of 
    certain permutations of $\mathbb{F}_{2^{2n}}$ in terms of bivariate polynomials over $\mathbb{F}_{2^n}$.
    The butterfly structure was later generalized, which turns out to be a powerful approach that generates infinite families of cryptographic functions with best known nonlinearity and differential properties.  
    This motivates us to construct cryptographically strong permutations from generalized butterfly structures.
	
	Boomerang connectivity table (BCT)  is a new tool introduced by Cid et al. in Eurocrypt'18 to evaluate 
	the vulnerability of cryptographic functions against boomerang attacks. Consequently, a cryptographic function 
	is desired to have boomerang uniformity as low as its differential uniformity.
	Based on generalized butterfly structures, this paper presents infinite families of permutations of $\mathbb{F}_{2^{2n}}$ for a positive odd integer $n$, which have high nonlinearity $2^{2n-1}-2^{n}$ and boomerang uniformity $4$. 
	We investigated both open and closed butterfly structures. It appears, according to experiment results, that open butterflies 
	do not produce permutation with boomerang uniformity $4$.
    On the other hand, for the closed butterflies,  we obtain a condition on coefficients $\alpha, \beta \in \mathbb{F}_{2^n}$ such that the functions
    $$V_i(x,y) := (R_i(x,y), R_i(y,x)), $$ where $R_i(x,y)=(x+\alpha y)^{2^i+1}+\beta y^{2^i+1}$ and $\gcd(i,n)=1$, permute $\gf_{2^n}^2$ and have boomerang uniformity $4$. 
    The main result in this paper consists of two major parts: the permutation property is investigated in terms of the univariate form of $V_i$, and 
    the boomerang uniformity is examined in terms of 
    the original bivariate form. In addition, experiment results for $n=3, 5$ indicate that the proposed condition seems to cover all coefficients $\alpha, \beta \in \mathbb{F}_{2^n}$ that 
    produce permutations $V_i$ with boomerang uniformity $4$.  
    
    However, the experiment result shows that the quadratic permutation $V_i$ seems to be affine equivalent to the Gold function. Therefore, unluckily, we may not to obtain new permutations with boomerang uniformity $4$ from the butterfly structure. 
\end{abstract}

\begin{IEEEkeywords}
	Permutations, Nonlinearity, Differential Uniformity, Boomerang Uniformity, Butterfly structure
\end{IEEEkeywords}

\section{Introduction}

Substitution boxes, known as S-boxes, are crucial nonlinear building blocks in modern block ciphers. 
In accordance with known attacks in the literature, Sboxes used in block ciphers are required to satisfy 
a variety of cryptographic criteria, including high nonlinearity \cite{chabaud1994links}, low differential uniformity \cite{nyberg1993differentially} and bijectivity.
In Eurocrypt'18, Cid et al. introduced a new tool of S-boxes, so-called the boomerang connectivity table (BCT), which similarly
analyzes the dependency between the upper part and lower part of a block cipher in a boomerang attack \cite{cid2018boomerang}. 
This new tool quickly attracted researchers' interest in studying properties and bounds of BCT of cryptographic functions.  Boura and Canteaut in \cite{boura2018boomerang} 
investigated the relation between entries in BCT and difference distribution table (DDT), and 
introduced the notion of the boomerang uniformity, which is the maximum value in BCT among all nonzero differences of inputs and outputs. 
They completely characterized the BCTs of $4$-bit S-boxes with differential uniformity $4$ classified in \cite{leander2007classification}, and
also determined the boomerang uniformities of the inverse function and the Gold function. Later Li et al. in \cite{li2019new} provided an equivalent formula to compute the boomerang uniformity of a cryptographic function.
Using the new formula, they characterized the boomerang uniformity 
by means of the Walsh transform, and computed the boomerang uniformities of some permutations with low differential uniformity. Mesnager et al.  considered the boomerang uniformity of quadratic permutations in \cite{mesnagerboomerang}, where they presented a characterization of quadratic permutations with boomerang uniformity $4$ and showed that the boomerang uniformity of certain quadratic permutations is preserved under extended affine (EA) equivalence.
Very recently, Calderini and Villa \cite{calderiniboomerang} also investigated the boomerang uniformities of some non-quadratic permutations with differential uniformity $4$.

It is shown that the boomerang uniformity of a cryptographic function is greater than or equal to its differential uniformity, and that   
the lowest possible boomerang uniformity $2$ is achieved by
almost perfect nonlinear (APN) functions \cite{cid2018boomerang,boura2018boomerang}. Clearly, APN permutations operating on even number of variables 
are of greatest interest, which is referred to as the \textit{BIG APN} problem in the community.
Nonetheless, by far no other instance for this problem, except for the Dillon APN permutation of $\mathbb{F}_{2^6}$, has been found. 
Therefore, it is particularly interesting to construct permutations of $\mathbb{F}_{2^{2n}}$ that have high nonlinearity, differential and boomerang uniformity $4$. 
Up to now, there are only three infinite and inequivalent families of permutations over $\gf_{2^{2n}}$ that have boomerang uniformity $4$ for odd integers $n\geq 1$: 
\begin{enumerate}[(1)]
	\item $f(x)=x^{-1}$ over $\gf_{2^{2n}}$\cite{boura2018boomerang};
	\item $f(x)=x^{2^{2i}+1}$ over $\gf_{2^n}$, where $\gcd(i,n)=1$ \cite{boura2018boomerang};
	\item $ f(x) = \alpha x^{2^{2s}+1}+\alpha^{2^{2k}}x^{2^{-{2k}}+2^{2k+2s}}$  over $\gf_{2^{2n}}$, where $n=3k$, $3\nmid {k}$, $3 \mid {(k+s)}$, $\gcd(3k,s)=1$,  and $\alpha$ is a primitive element of $\gf_{2^{2n}}$ \cite{mesnagerboomerang}.
\end{enumerate}

 In Crypto'16, Perrin et al. investigated the only APN permutation over $\gf_{2^6}$  \cite{browning2010apn} by means of reverse-engineering and proposed the open butterfly and the closed butterfly structures \cite{perrin2016cryptanalysis}. A generalized butterfly structure was later proposed by Canteaut et al. \cite{canteaut2017generalisation}. The butterfly structures represent functions over $\gf_{2^n}^2$ in terms of bivariate form. 
 It is shown that the open butterfly structure produces permutations of $\gf_{2^n}^2$, which are CCZ-equivalent to the functions with simpler forms derived from the closed structure \cite{perrin2016cryptanalysis}.
Since differential uniformity is an invariant under CCZ-equivalence, one may consider to combine open and closed butterfly structures to construct permutations with low differential uniformity.
 As a matter of fact, by investigating differential uniformity of functions  from the closed butterfly structure, 
 researchers constructed several infinite families of differentially 4-uniform permutations  over $\gf_{2^n}^2$ with the open butterfly structure \cite{li2018generalization,fu2017differentially,canteaut2017generalisation,canteaut2018if}. 
Motivated by recent works on the butterfly structure, this paper aims to construct
infinite families of permutations with boomerang uniformity $4$ from generalized butterfly structures.  

Let $n$ be a positive odd integer and $q=2^n$.
Let  $\gamma$ be a primitive element of $\mathbb{F}_{2^2}$, i.e, $\gamma^2=\gamma + 1$. 
Since $n$ is odd,  the finite field $\gf_{q^2}=\gf_q(\gamma)$ and the basis $1, \gamma$ of $\mathbb{F}_{q^2}$ over $\mathbb{F}_q$ induces
a one-to-one correspondence between $\mathbb{F}_q^2$ and $\mathbb{F}_{q^2}$ as follows:  
$$
z = x+\gamma y   \leftrightarrow
 (x, y) = (\gamma^2z+\gamma z^q, z^q+z).
$$
Define a bivariate polynomial 
$$R_i(x,y)=(x+\alpha y)^{2^i+1}+\beta y^{2^i+1}, \quad \alpha,\beta\in\gf_{q}.$$ 
Since $n$ is odd, it is clear that the mapping $x\mapsto R_i(x,y)$ is a permutation of $\mathbb{F}_q$ for any fixed $y\in \mathbb{F}_q$.
From experiment results, the open butterfly structure based on $R_i(x,y)$ given in \cite{canteaut2017generalisation}
 seems not to yield  permutation with boomerang uniformity $4$ of $\gf_{2^3}^2$. Therefore, we will concentrate on the closed butterfly structure. 
Recall that the closed butterfly structure of $\gf_{q}^2$ from $R_i(x,y)$ is given by $$V_i(x,y) = (R_i(x,y), R_i(y,x)).$$  
According to the one-to-one correspondence between $\mathbb{F}_q^2$ and $\mathbb{F}_{q^2}$, the closed butterfly structure $V_i(x,y)$ over $\mathbb{F}_q^2$ can be expressed in a univariate form as $R_i(x,y) + \gamma R_i(y,x)$ with $z=x+\gamma y$ and $(x, y) = (\gamma^2z+\gamma z^q, z^q+z)$. 
By substituting $z$ with $\gamma z$ when $i$ is odd (resp. $\gamma^2 z$ when $i$ is even), this univariate polynomial can be transformed as 
\begin{equation}
\label{polynomial_representation}
f(z)=\epsilon_1z^{q\cdot(2^i+1)}+ \epsilon_2 z^{q\cdot 2^i+1} +  \epsilon_3z^{2^i+q} +  \epsilon_4 z^{2^i+1},
\end{equation}
where the coefficients satisfy 
$(\epsilon_1, \epsilon_2, \epsilon_3, \epsilon_4) = (\varepsilon_1, \varepsilon_2, \varepsilon_3, \varepsilon_4)$ for even $i$; and 
$(\epsilon_1, \epsilon_2, \epsilon_3, \epsilon_4) = (\varepsilon_3, \varepsilon_4, \varepsilon_1, \varepsilon_2)$ for odd $i$  with
 \begin{equation}
 \label{ieven}
\left\{
\begin{array}{lr}
 \varepsilon_1 =  \alpha^{2^i}+\alpha+1\\
  \varepsilon_2=\alpha^{2^i+1}+\alpha+\beta+1 \\
 \varepsilon_3=\alpha^{2^i+1}+\alpha^{2^i}+\beta+1\\
  \varepsilon_4=\alpha^{2^i+1}+\alpha^{2^i}+\alpha+\beta.
\end{array}
\right.
\end{equation} 
For the coefficients $\epsilon_1, \epsilon_2, \epsilon_3, \epsilon_4$ in the polynomial $f_i(z)$, define 
	  \begin{equation}	\label{varphi}
\left\{
\begin{array}{lr}
\varphi_1=\epsilon_1\epsilon_3+\epsilon_2\epsilon_4  \\ 
\varphi_2=\epsilon_1\epsilon_2+\epsilon_3\epsilon_4 \\
\varphi_3=\epsilon_1^2+\epsilon_4^2 \\
\varphi_4=\epsilon_1^2+\epsilon_2^2+\epsilon_3^2+\epsilon_4^2.
\end{array}
\right.
\end{equation} 
 This paper aims to characterize the condition on $\alpha$ and $\beta$ such that $V_i$ permutes $\gf_{q}^2$ and has boomerang uniformity $4$, and 
 the main result is given as follows.

%

\begin{Th}
	\label{main_theorem}  Let $q=2^n$ with $n$ odd, $\gcd(i,n)=1$ and $R_i(x,y)=(x+\alpha y)^{2^i+1}+\beta y^{2^i+1}$ with $\alpha, \beta\in \mathbb{F}_q^*$, where $\mathbb{F}_q^* = \mathbb{F}_q\setminus \{0\}$.
	Then the function $V_i(x,y) := (R_i(x,y), R_i(y,x))$ permutes $\gf_q^2$ and has boomerang uniformity  $4$ if $(\alpha, \beta)$ is taken from the 
	following set 
	\begin{equation}
	\label{Gamma}
	\Gamma = \left\{ (\alpha,\beta)\in\gf_{q}^*\times \gf_{q}^* ~\big|~ \varphi_2^{2^i}=\varphi_1\varphi_4^{2^i-1} ~\text{and}~\varphi_4\neq0  \right\},
	\end{equation}
	where  $\varphi_1, \varphi_2, \varphi_4$ are given in \eqref{varphi}.
\end{Th}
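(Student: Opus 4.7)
The theorem makes two assertions about $V_i$: that it permutes $\gf_q^2$ and that its boomerang uniformity equals $4$. As hinted in the excerpt, I would address these via complementary representations---the permutation property via the univariate form $f(z)$ of \eqref{polynomial_representation}, and the boomerang uniformity via the bivariate form $V_i(x, y)$.

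\textbf{Part I (permutation).} Since $f(z)$ has algebraic degree $2$ over $\gf_{2}$, for any $a\in\gf_{q^2}$ the derivative $D_a f(z) = f(z+a)+f(z)$ takes the form $L_a(z)+f(a)$, where $L_a$ is a $\gf_2$-linearized polynomial supported on the monomials $z,z^q,z^{2^i},z^{q\cdot 2^i}$. Bijectivity of $f$ is equivalent to $L_a(z)=f(a)$ having no solution for any $a\neq 0$. The plan is to assume a solution exists, combine $L_a(z)=f(a)$ with its $q$-conjugate $L_a(z)^q=f(a)^q$, and view the resulting pair as a $\gf_{q^2}$-linear system in the four unknowns $z,z^q,z^{2^i},z^{q\cdot 2^i}$ (treating the $2^i$-powers as independent via Frobenius); its consistency forces an algebraic identity purely in $a,a^q,\alpha,\beta$. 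Rewriting this identity in terms of the symmetric coefficients $\varphi_1,\ldots,\varphi_4$ of \eqref{varphi}, it should collapse to $a=0$ exactly under the hypothesis $\varphi_2^{2^i}=\varphi_1\varphi_4^{2^i-1}$ and $\varphi_4\neq 0$.

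\textbf{Part II (boomerang uniformity).} Here I would work with the bivariate form. Because $V_i$ is quadratic, the results of Mesnager et al.\ \cite{mesnagerboomerang} (see also \cite{li2019new}) reduce boomerang uniformity to a second-order derivative calculation: the $((a,b),(c,d))$ entry of the BCT equals the number of solutions of the first-order equation $D_{(a,b)} V_i(x,y)=(c,d)$, counted in cosets of
\[
W_{(a,b)} = \{(u,v)\in\gf_q^2 : D_{(a,b)} D_{(u,v)} V_i = (0,0)\}.
\]
Expanding $D_{(a,b)} D_{(u,v)} V_i$ from $R_i(x,y)=(x+\alpha y)^{2^i+1}+\beta y^{2^i+1}$ yields a pair of bilinear forms in $((a,b),(u,v))$, so $W_{(a,b)}$ is the kernel of a $2\times 2$ linearized system over $\gf_q$ whose entries involve $\alpha,\alpha^{2^i},\beta,\beta^{2^i}$ and the $2^i$-th powers of $a,b$. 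The identity $\varphi_2^{2^i}=\varphi_1\varphi_4^{2^i-1}$ is precisely what forces this system to drop rank only along the trivial $\gf_2$-span $\{0,(a,b)\}$, giving $|W_{(a,b)}|\leq 4$; combined with the differential uniformity $4$ already known for these functions (cf.\ \cite{li2018generalization,canteaut2017generalisation}), each coset contributes at most one solution to the first-order equation, so the boomerang uniformity is bounded by $4$.

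\textbf{Main obstacle.} Both halves reduce to a symbolic elimination that should collapse under one algebraic identity, and the hard part is carrying out this elimination transparently. Naive expansion of $L_a(z)=f(a)$ and of the second-order derivative produces unwieldy polynomials in $\alpha,\beta,a,a^q$, and making the identity $\varphi_2^{2^i}=\varphi_1\varphi_4^{2^i-1}$ emerge naturally requires organizing intermediate quantities in the $\varphi_j$ coordinates throughout. I expect particular care will be needed for degenerate $(a,b)$---those for which one of the bilinear forms from $R_i$ vanishes identically---since separate verification may be required to ensure $W_{(a,b)}$ does not grow beyond expectation and that the rank argument in Part~I does not break down.
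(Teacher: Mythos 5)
Your proposal has genuine gaps in both halves, and in each case the missing piece is precisely where the paper's work lies.

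\textbf{Part I.} Your reduction of bijectivity to the non-solvability of $L_a(z)=f(a)$ is correct, but the elimination you sketch cannot deliver it. Since $a\in\ker L_a$ for a quadratic $f$, the map $L_a$ is never injective, so $\mathrm{Im}(L_a)$ is a proper subspace and what must be shown is the \emph{non-membership} $f(a)\notin\mathrm{Im}(L_a)$ for every $a\neq 0$. Pairing $L_a(z)=f(a)$ with its $q$-conjugate gives two equations in the four quantities $z,z^q,z^{2^i},z^{q\cdot 2^i}$; such an underdetermined system is generically consistent, so no ``consistency identity'' in $a,\alpha,\beta$ falls out, and treating the $2^i$-powers as independent discards exactly the Frobenius relations that carry the obstruction. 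One would instead need a characterization of $\mathrm{Im}(L_a)$ (e.g.\ via the adjoint kernel and a trace condition), which your plan does not supply. The paper avoids this entirely: it writes $f(x)=x^{2^i+1}h(x^{q-1})$ and applies the multiplicative criterion (Lemma \ref{lem1}) to reduce the question to whether $g(x)=x^{2^i+1}h(x)^{q-1}$ permutes the unit circle $\mu_{q+1}$, which is then settled through Dickson polynomials and the trace identities of Lemma \ref{property of Gamma}.

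\textbf{Part II.} The logical step ``$|W_{(a,b)}|\le 4$ plus differential uniformity $4$ implies boomerang uniformity $4$'' is false, and this is the crux. With $|W_{(a,b)}|=4$ (note: the paper shows in Proposition \ref{relations} that $W_{(a,b)}$ has exactly four elements, so your phrase ``drops rank only along $\{0,(a,b)\}$'' would force APN-ness and contradicts the known differential uniformity $4$), the BCT entry at $((a,b),(c,d))$ is $\sum_{w\in W_{(a,b)}^*}\#\{x: L_w(x)=(c,d)+V_i(w)\}$, and since $\ker L_w\supseteq W_{(a,b)}$ for every $w\in W_{(a,b)}^*$, each summand is $0$ or at least $4$; a priori the entry can reach $8$ or $12$. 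Ruling this out is exactly Mesnager et al.'s criterion (Lemma \ref{quadratic_boomerang}): one must prove $\mathrm{Im}_{V_i,(a_1,b_1)}=\mathrm{Im}_{V_i,(a_2,b_2)}$ for all nonzero $(a_2,b_2)\in W_{(a_1,b_1)}$. That verification --- computing the radical explicitly and then exhibiting an invertible $P$ with $PA_1=A_2$, $PB_1=B_2$ (equivalently $B_2B_1^{-1}A_1=A_2$), with separate treatment of the degenerate cases $\mathrm{Det}(A_1)=\mathrm{Det}(B_1)=0$ --- constitutes essentially all of Section \ref{boomerang} and is absent from your plan.
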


According to Mesnager et al.'s resut in \cite{mesnagerboomerang}, a quadratic permutation of $\mathbb{F}_{4^m}$ in the form
		\begin{equation}
		\label{special_form}f(z)=\sum_{0\le j\le k\le m-1}c_{ij}z^{4^j+4^k}, \quad c_{jk}\in\gf_{4^m},
		\end{equation}  has boomerang uniformity $4$ if its differential uniformity equals $4$. The function in Theorem \ref{main_theorem} has the univariate polynomial $f_i(z)=\epsilon_1z^{q\cdot(2^i+1)}+ \epsilon_2 z^{q\cdot 2^i+1} +  \epsilon_3z^{2^i+q} +  \epsilon_4 z^{2^i+1}$ with $q=2^n$ and $n$ odd. Clearly, $q\cdot(2^i+1)$ and $2^i+1$ can not belong to the same cyclotomic class of $4^j+4^k$ for any integers $j,k$ since $n$ is odd. Hence, it seems that our results can not be reduced by the above result. 

The rest of this paper is organized as follows:
we firstly recall the definitions of differential uniformity, boomerang uniformity, butterfly structure and introduce some useful lemmas in Section \ref{Preliminaries}. Sections \ref{permutation} and \ref{boomerang} are devoted to 
proving the permutation property and the boomerang uniformity in Theorem \ref{main_theorem}, respectively. Finally,  Section \ref{Conclusions} draws a conclusion of our work and raises some open questions. 

\section{Preliminaries}
\label{Preliminaries}
In this section, we assume $n$ is an arbitrary positive integer and $q=2^n$. Let $\tr_q(\cdot)$ denote the absolute trace function over $\gf_q$, i.e., $\tr_q(x) = x+x^2+\cdots+x^{2^{n-1}}$ for any $x\in\gf_q$. For any set $E$, the nonzero elements of $E$ is denoted by $E\backslash\{0\}$ or $E^{*}$. 

\subsection{Differential Uniformity and Boomerang Uniformity}

The concept of differential uniformity was introduced to reveal the subtleties of differential attacks.

\begin{Def} \cite{nyberg1993differentially}
	Let $f(x)$ be a function from $\gf_q$ to itself and $a,b\in\gf_q$. 
	The difference distribution table (DDT) of $f(x)$ is given by a $q \times q$ table $D$, in which the entry for the $(a,b)$ position is given by 
	$$DDT(a,b)=\#\{x\in\gf_q | f(x+a)+f(x) =b  \}. $$ 
The differential uniformity  of $f(x)$ is given by
	$$\Delta_f = \max \limits_{a\in \mathbb{F}_q^*, b\in\gf_{q}} DDT(a,b).  $$
\end{Def}
It is straightforward for any function from $\mathbb{F}_q$ to itself, each entry in its DDT takes even value and its differential uniformity is no less than $2$. A function with the minimum possible differential uniformity $2$ is called an almost perfect nonlinear (APN) function.

In \cite{cid2018boomerang}, Cid et al. introduced the concept of boomerang connectivity table of a permutation $f$ from $\gf_{2}^n$ to itself { as follows, which is also suitable for the case $\gf_{2^n}$ clearly. Later, Boura and Canteaut introduced the concept of the boomerang uniformity, which is defiend by the maximum value in BCT excluding the first row and column. 
	
	\begin{Def}
		\cite{cid2018boomerang,boura2018boomerang}
		\label{BCT-def-Fq}
		Let $f$ be an invertible function from $\gf_q$ to itself  and $a,b\in\gf_q$. The boomerang connectivity table (BCT) of $f$ is given by a $q\times q$ table,  in which the entry for the $(a,b)$ position is given by 
		\begin{equation}
		\label{T(a,b)}
			BCT(a,b)=\#\left\{x\in\gf_q:  
		f^{-1}(f(x)+b)+f^{-1}(f(x+a)+b) = a \right\}.
		\end{equation}		
		{The boomerang uniformity}  of $f$ is defined by 
		{	$$\delta_f = \max \limits_{a,b\in\gf_q^*} BCT(a,b).  $$}
	\end{Def}
It is shown in \cite{cid2018boomerang,boura2018boomerang} that $BCT(a, b) \geq DDT(a, b)$ for any $a, b$ in $\mathbb{F}_q$. In \cite{li2019new}, Li et al. presented an equivalent formula to compute BCT and the boomerang uniformity without knowing $f^{-1}(x)$ and $f(x)$ simultaneously as follows.

\begin{Prop}
	\cite{li2019new}
	\label{boomerang-def}
	Let $q=2^n$ and $f(x)\in\gf_q[x]$ be a permutation polynomial over $\gf_q$. Then the BCT of $f(x)$ can be given by a $q\times q$ table $BCT$, in which the entry $BCT(a,b)$ for the $(a,b)$ position is given by the number of solutions $(x,y)$ in $\gf_q\times \gf_q$ of the following equation system.
	   \begin{equation}
	\label{boomerang-uniformity}
	\left\{
	\begin{array}{lr}
	f(x+a) + f(y+a) = b,  \\
	f(x) + f(y) = b. 
	\end{array}
	\right.
	\end{equation} 
 {Equivalently},  the boomerang uniformity of $f(x)$, given by $\delta_f$, is the maximum number of solutions in $\gf_q\times\gf_q$ of (\ref{boomerang-uniformity}) as $a,\,b$ run through $\gf_{q}^{*}$.
\end{Prop}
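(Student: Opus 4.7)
The plan is to exhibit an explicit bijection between the solution set appearing in Definition~\ref{BCT-def-Fq} and the solution set of the system~\eqref{boomerang-uniformity} for each fixed pair $(a,b)$, from which the cardinality identity $BCT(a,b) = \#\{(x,y)\in\gf_q\times\gf_q : \text{\eqref{boomerang-uniformity} holds}\}$ will follow. The occurrence of $f^{-1}$ in the original definition is precisely the obstacle to a polynomial characterization, and the natural device is to introduce an auxiliary variable $y$ representing $f^{-1}(f(x)+b)$, which absorbs the inverse.

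First I would define the map $\phi(x) := \bigl(x,\, f^{-1}(f(x)+b)\bigr)$ on the set of $x\in\gf_q$ satisfying $f^{-1}(f(x)+b) + f^{-1}(f(x+a)+b) = a$, and verify that $\phi$ takes values in the solution set of~\eqref{boomerang-uniformity}. Writing $y = f^{-1}(f(x)+b)$, the defining relation $f(y) = f(x)+b$ yields the second equation $f(x)+f(y)=b$ of the system at once. Substituting $y$ into the condition on $x$ and using the characteristic-two identity $y+y = 0$ gives $f^{-1}(f(x+a)+b) = y+a$; applying $f$ then produces the first equation $f(x+a)+f(y+a) = b$ of the system.

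For the converse, suppose $(x,y)$ satisfies~\eqref{boomerang-uniformity}. Since $f$ is a permutation, the second equation forces $y = f^{-1}(f(x)+b)$ uniquely, while the first forces $y+a = f^{-1}(f(x+a)+b)$; adding these relations yields $f^{-1}(f(x)+b) + f^{-1}(f(x+a)+b) = y+(y+a) = a$, so $x$ lies in the original solution set and $\phi(x) = (x,y)$. Hence $\phi$ is a bijection, proving the claimed cardinality equality. The final assertion about $\delta_f$ is then immediate from its definition as $\max_{a,b\in\gf_q^*} BCT(a,b)$. The result poses no genuine obstacle beyond careful bookkeeping of the substitution; the essential point is that the characteristic-two additive structure collapses the pair of inverse-of-$f$ values cleanly once a solution $x$ is coupled with its companion $y = f^{-1}(f(x)+b)$.
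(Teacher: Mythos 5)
Your proposal is correct. The paper itself gives no proof of this proposition (it is quoted from Li et al.\ \cite{li2019new}), but your argument is exactly the standard one behind that result: the substitution $y = f^{-1}(f(x)+b)$ defines a bijection from $\{x : f^{-1}(f(x)+b)+f^{-1}(f(x+a)+b)=a\}$ onto the solution set of the system, both directions of which you verify cleanly, and the statement about $\delta_f$ follows from the definition. Nothing is missing.
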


Let $f$ be a quadratic function  from $\gf_q$ to itself with $f(0)=0$. The associated symmetric bilinear mapping is given by $S_f(x,y)=f(x+y)+f(x)+f(y)$, where $x,\, y\in\gf_q.$ 
For any $a\in\gf_{q}$, define 
$$\mathrm{Im}_{f,a} = \{S_f(a,x): x \in\gf_{q}  \}.$$ Very recently, Mesnager et al. \cite{mesnagerboomerang} presented a characterization about quadratic permutations with boomerang uniformity  $4$ using the new formula (\ref{boomerang-uniformity}).
\begin{Lemma}
	\cite{mesnagerboomerang}
	\label{quadratic_boomerang}
	Let $q=2^n$ and $f$ be a quadratic permutation of $\gf_{q}$ with differential uniformity  $4$. Then the boomerang uniformity of $f$ equals $4$ if and only if $\mathrm{Im}_{f,a}=\mathrm{Im}_{f,b}$ for any $a,b\in\gf_{q}^{*}$ satisfying $S_f(a,b)=0$. 
\end{Lemma}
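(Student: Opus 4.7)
The plan is to express $BCT(a,b)$ in a form that exposes the interplay between the images $\mathrm{Im}_{f,u}$ and the kernels $K_u := \{x \in \mathbb{F}_q : S_f(u,x) = 0\}$, translate $\delta_f = 4$ into a disjointness statement about certain cosets, and then match that disjointness with the image equality stated in the lemma.

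I first linearize the BCT system: using the quadratic identity $f(x+a)+f(y+a) = f(x)+f(y)+S_f(x+y,a)$ in characteristic $2$ and substituting $u = x+y$, the system in Proposition~\ref{boomerang-def} becomes $u \in K_a$ together with $S_f(x,u) = b + f(u)$. For each admissible $u$, the second equation has $|K_u|$ solutions in $x$ if $b + f(u) \in \mathrm{Im}_{f,u}$ and none otherwise, while $u = 0$ contributes only for $b = 0$. Hence, for $a,b \in \mathbb{F}_q^{*}$, $BCT(a,b) = \sum_{u \in K_a^{*}} [\,b \in f(u) + \mathrm{Im}_{f,u}\,] \cdot |K_u|$.

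Next I exploit $\Delta_f = 4$. Since $S_f(a,a)=0$ in characteristic $2$ we have $a \in K_a$, and together with $|K_a| \le 4$ this forces $K_a$ to be totally isotropic for $S_f$; in particular $K_a \subseteq K_u$ for every $u \in K_a$, and with $|K_u| \le 4$ this gives $K_u = K_a$ whenever $u \in K_a^{*}$ and $|K_a| = 4$. When $|K_a| = 2$, $BCT(a,b) \le 2$; when $|K_a| = 4$, $BCT(a,b) = 4 \cdot \#\{u \in K_a^{*} : b \in f(u) + \mathrm{Im}_{f,u}\}$. A direct count gives $\sum_{b \neq 0} BCT(a,b) = 3q$, and the permutation property yields $f(u) \notin \mathrm{Im}_{f,u}$ for $u \neq 0$, so $0 \notin f(u) + \mathrm{Im}_{f,u}$. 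Consequently $\delta_f = 4$ is equivalent to the three cosets $\{f(u)+\mathrm{Im}_{f,u}\}_{u\in K_a^{*}}$ being pairwise disjoint for every $a$ with $|K_a|=4$.

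To bridge disjointness and image equality, I use two identities: $f(a) + f(u_1) + f(u_2) = 0$ (from $u_1 + u_2 = a$ and $S_f(u_1, u_2) = 0$) and $S_f(u_2,x) = S_f(a,x)+S_f(u_1,x)$, the latter giving $\mathrm{Im}_{f,u_2}\subseteq \mathrm{Im}_{f,a}+\mathrm{Im}_{f,u_1}$ and, by symmetry, a common subspace $H := \mathrm{Im}_{f,a}+\mathrm{Im}_{f,u_1} = \mathrm{Im}_{f,a}+\mathrm{Im}_{f,u_2} = \mathrm{Im}_{f,u_1}+\mathrm{Im}_{f,u_2}$. For the sufficient direction, $\mathrm{Im}_{f,u}=\mathrm{Im}_{f,a}$ makes $H=\mathrm{Im}_{f,a}$ of codimension $2$, and in the quotient the three classes $\overline{f(u)}$ sum to zero with none zero (permutation), so they are the three distinct nonzero elements and disjointness follows. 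The main obstacle is the necessary direction, which I will handle by a parity argument in $\mathbb{F}_q/H$: disjointness forces $f(u)\notin H$ for every $u\in K_a^{*}$; if $H = \mathbb{F}_q$ the cosets $f(a)+\mathrm{Im}_{f,a}$ and $f(u_1)+\mathrm{Im}_{f,u_1}$ necessarily meet, a contradiction; if $H$ is a proper hyperplane, then in $\mathbb{F}_q/H \cong \mathbb{F}_2$ each $\overline{f(u)}$ would equal the nonzero class, summing to $1 \neq 0$ and contradicting $f(a)+f(u_1)+f(u_2)=0$. Both cases being impossible, $H = \mathrm{Im}_{f,a}$, forcing $\mathrm{Im}_{f,u} = \mathrm{Im}_{f,a}$ for every $u \in K_a^{*}$, as required.
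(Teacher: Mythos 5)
This lemma is quoted in the paper from Mesnager et al.\ \cite{mesnagerboomerang} without proof, so there is no in-paper argument to compare against; your proposal must stand on its own, and it essentially does. The reduction of the system in Proposition \ref{boomerang-def} to $u=x+y\in K_a$ together with $S_f(x,u)=b+f(u)$, the resulting formula $BCT(a,b)=\sum_{u\in K_a^{*}}[\,b\in f(u)+\mathrm{Im}_{f,u}\,]\cdot|K_u|$, the total isotropy of $K_a$ (hence $K_u=K_a$ for $u\in K_a^{*}$ when $|K_a|=4$), and the identification of $\delta_f=4$ with pairwise disjointness of the three cosets are all correct; note also that the pairs $a,b\in\mathbb{F}_q^{*}$ with $S_f(a,b)=0$ and $b\neq a$ are exactly those with $b\in K_a^{*}$ and $|K_a|=4$, so your bridge covers every case the lemma requires. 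The one place where you assert more than you justify is the claim that disjointness forces $f(u)\notin H$. It is true, but only because $f(v)+f(w)=f(u)$ whenever $\{u,v,w\}=K_a^{*}$, while two cosets $f(v)+\mathrm{Im}_{f,v}$ and $f(w)+\mathrm{Im}_{f,w}$ meet precisely when $f(v)+f(w)\in\mathrm{Im}_{f,v}+\mathrm{Im}_{f,w}=H$; you should state this coset-intersection criterion explicitly. The same criterion shows that disjointness is equivalent to the three classes $\overline{f(a)},\overline{f(u_1)},\overline{f(u_2)}$ being pairwise distinct in $\mathbb{F}_q/H$, which disposes of the hyperplane case in one line (three pairwise distinct elements cannot exist in a two-element group) without any appeal to the relation $f(a)+f(u_1)+f(u_2)=0$. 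With that justification inserted the argument is complete; the identity $\sum_{b\neq 0}BCT(a,b)=3q$ is never actually used and can be dropped.
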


\subsection{The Butterfly Structure}

In Crypto'16, Perrin et al. \cite{perrin2016cryptanalysis} analyzed the only known APN permutation over $\gf_{2^6}$ \cite{browning2010apn} and discovered that the APN permutation over $\gf_{2^6}$ has a simple decomposition relying on $x^3$ over $\gf_{2^3}$. Based on the power permutation $x^e$ over $\gf_{2^n}$, they presented the open butterfly structure and the closed butterfly structure, which were later generalized by  Canteaut et al. in \cite{canteaut2017generalisation}.

\begin{Def}
	\cite{perrin2016cryptanalysis}
	Let $q=2^n$ and  $\alpha\in\gf_{q}$, $e$ be an integer such that $x^e$ is a permutation over $\gf_{q}$ and $R_k[e,\alpha]$ be the keyed permutation 
	$$R_{k}[e,\alpha](x) = (x+\alpha k)^e+k^e.$$
	The following functions 
	$$ H_e^{\alpha}(x,y) = \left( R_{R_y[e,\alpha](x)}^{-1}(y), R_y[e,\alpha](x)  \right), $$
	$$V_e^{\alpha}(x,y) = \left( R_y[e,\alpha](x), R_x[e,\alpha](y)  \right)$$
	are called the open butterfly structure and closed butterfly structure respectively. 
\end{Def}


\begin{Def}
	\cite{canteaut2017generalisation}
	Let $q=2^n$ and $R(x,y)$ be a bivariate polynomial of $\gf_{q}$ such that $R_y: x\to R(x,y)$ is a permutation of $\gf_{q}$ for all $y$ in $\gf_{q}$. The closed butterfly $V_R$ is the function of $\gf_{q}^2$ defined by 
	$$V_R(x,y)=(R(x,y),R(y,x)),$$
	and the open butterfly $H_R$ is the permutation of $\gf_{q}^2$ defined by 
	$$H_R(x,y)= \left( R_{R_y^{-1}(x)}(y), R_y^{-1}(x)  \right),$$
	where $R_y(x)=R(x,y)$ and $R_{y}^{-1}(R_y(x))=x$ for any $x,y$. 
\end{Def}

In \cite{li2018generalization}, Li et al. discussed the cryptographic properties, including the differential uniformity, the nonlinearity and the algebraic degree, of the butterfly structure of the form
$$V_i := (R_i(x,y), R_i(y,x))$$ with $R_i(x,y)=(x+\alpha y)^{2^i+1}+\beta y^{2^i+1}$.

\begin{Lemma}
	\label{differential_uniformity_butterfly}
	\cite{li2018generalization}
	Let $n$ be odd, $q=2^n$, $i$ be an integer with $\gcd(i,n)=1,$ $ \alpha,\beta \in\gf_{q}^{*}$ and $\beta\neq (\alpha+1)^{2^i+1}$. Then the differential uniformity of 
	$$V_i := (R_i(x,y), R_i(y,x))$$ 
	is at most $4$, where $R_i(x,y)=(x+\alpha y)^{2^i+1}+\beta y^{2^i+1}$.
\end{Lemma}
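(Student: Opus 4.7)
The plan is to exploit that $V_i$ is quadratic in $(x,y)$, so for any fixed $(a,b)\in\gf_q^2\setminus\{(0,0)\}$ the difference $V_i(x+a,y+b)+V_i(x,y)$ is an affine function of $(x,y)$. Consequently, for any $(c,d)\in\gf_q^2$ the equation $V_i(x+a,y+b)+V_i(x,y)=(c,d)$ has either no solution or exactly $|\ker L_{a,b}|$ solutions, where $L_{a,b}\colon\gf_q^2\to\gf_q^2$ denotes the $\gf_2$-linear part of this difference. It therefore suffices to prove that $|\ker L_{a,b}|\le 4$ for every nonzero $(a,b)\in\gf_q^2$.

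Using the identity $(u+v)^{2^i+1}+u^{2^i+1}+v^{2^i+1}=u^{2^i}v+uv^{2^i}$, I would expand the two component equations $L_{a,b}(x,y)=0$ as $\gf_q$-linear relations in $x,\,x^{2^i},\,y,\,y^{2^i}$, with coefficients that are explicit polynomials in $a,b,\alpha,\beta$. The natural auxiliary variables are $u=x+\alpha y$, $v=y+\alpha x$, $s=a+\alpha b$, and $t=b+\alpha a$. I would then eliminate $y^{2^i}$ between the two component equations to obtain a relation of the shape $\Phi(x)+\Psi\,y=0$, where $\Phi$ is a $2^i$-linearized polynomial in $x$ of Frobenius degree at most $1$ and $\Psi\in\gf_q$. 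Raising this to the $2^i$-power and substituting into the remaining equation produces a single $2^i$-linearized polynomial $\mathcal{L}(x)=0$ of Frobenius degree at most $2$. Since $\gcd(i,n)=1$, the map $x\mapsto x^{2^i}$ is a field automorphism of $\gf_q$, so any nonzero $2^i$-linearized polynomial of Frobenius degree $2$ has at most $2^2=4$ roots in $\gf_q$; the value of $y$ is then determined from $x$ via the first relation, giving $|\ker L_{a,b}|\le 4$ in the generic case.

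The main obstacle will be handling the degenerate subcases: when $\Psi=0$, when an intermediate elimination coefficient vanishes, or when $s=0$ or $t=0$ (corresponding to $a=\alpha b$ or $b=\alpha a$). In each such branch one must verify separately that the remaining two component equations still constrain $(x,y)$ to at most four solutions. I expect the hypothesis $\beta\neq(\alpha+1)^{2^i+1}$ to enter precisely in ruling out one pathological subcase, since $\beta=(\alpha+1)^{2^i+1}$ is exactly the condition $R_i(1,1)=0$, which would allow a linear factor to vanish identically and make $\mathcal{L}$ vanish on all of $\gf_q$ for some choice of $(a,b)$. Combining the generic elimination estimate with the finite case analysis yields $|\ker L_{a,b}|\le 4$ uniformly for every nonzero $(a,b)$, and hence $\Delta_{V_i}\le 4$, as claimed.
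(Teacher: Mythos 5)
First, a point of reference: the paper does not prove this statement at all --- it is quoted from \cite{li2018generalization} as Lemma \ref{differential_uniformity_butterfly}, so there is no in-paper proof to compare against. Your overall strategy is the standard and correct one for quadratic maps: since $V_i$ is quadratic, $V_i(x+a,y+b)+V_i(x,y)$ is affine in $(x,y)$, every nonempty fibre of it has size $|\ker L_{a,b}|$, and it suffices to bound $|\ker L_{a,b}|$ by $4$. Your reduction of $\ker L_{a,b}$ to the roots of a $2^i$-linearized polynomial of Frobenius degree at most $2$, together with the fact that such a polynomial, \emph{if nonzero}, has at most $2^2=4$ roots in $\gf_{2^n}$ when $\gcd(i,n)=1$, is also sound; this elimination is in fact the same computation the paper later performs explicitly in the system \eqref{B_V=0} and in the proof of Proposition \ref{relations}, where the two component equations are combined to kill the $a_2^{2^i}$ and $b_2^{2^i}$ terms. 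Your identification of where the hypothesis on $\beta$ should enter is also consistent with the paper's remark that $\beta\neq(\alpha+1)^{2^i+1}$ is exactly $\varphi_4\neq 0$, since $\varphi_4=\big((\alpha+1)^{2^i+1}+\beta\big)^2$.

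The genuine gap is that the proposal stops exactly where the content of the lemma begins. The bound ``at most $4$ roots'' applies only to a \emph{nonzero} linearized polynomial, and you never verify that the polynomial $\mathcal{L}$ produced by your elimination is nonzero for every nonzero $(a,b)$ satisfying your non-degeneracy assumptions; nor do you carry out the ``finite case analysis'' for the branches $\Psi=0$, $s=0$, $t=0$, or a vanishing elimination coefficient. These are not routine afterthoughts: the lemma is false without the hypothesis $\beta\neq(\alpha+1)^{2^i+1}$, so some branch of the analysis must genuinely fail when $\varphi_4=0$, and the proof has to exhibit exactly which coefficient combinations can vanish simultaneously and show that $\varphi_4\neq0$ excludes the bad configuration. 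Writing ``I expect the hypothesis to enter precisely in ruling out one pathological subcase'' records a (plausible) guess rather than a proof. To close the argument you would need to compute the coefficients of $\mathcal{L}$ explicitly in terms of $a,b,\alpha,\beta$ (they will be expressible through the quantities $\varphi_1+\varphi_4$, $\varphi_2+\varphi_4$, $\varphi_3+\varphi_4$, $\varphi_4$ appearing in the paper's $\lambda_j$ and $\eta_j$), show they cannot all vanish unless $\varphi_4=0$ or $(a,b)=(0,0)$, and check each degenerate branch separately. As it stands, the proposal is a correct plan with the decisive verification missing.
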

Since the boomerang uniformity of a function is no less than its differential uniformity, functions with differential uniformity $4$ is a natural starting point for constructing permutations with boomerang uniformity $4$.
In fact, the condition $\beta\neq (\alpha+1)^{2^i+1}$ in Lemma \ref{differential_uniformity_butterfly} corresponds to the condition $\varphi_4\neq 0$ in the set $\Gamma$ required in Theorem \ref{main_theorem}.

\subsection{Useful Lemmas}

  This subsection summarizes some lemmas that will be used for proving the permutation property of the function in Theorem \ref{main_theorem}.

\begin{Lemma}
	\label{lem1}
	\cite{park2001permutation,wang2007cyclotomic,zieve2009some}
	Pick $d,r > 0$ with $d\mid (q-1)$, and let $h(x)\in\gf_q[x]$. Then $f(x)=x^rh\left(x^{\left.(q-1)\middle/d\right.}\right)$ permutes $\mathbb{F}_q$ if and only if both
	\begin{enumerate}[(1)]
		\item $\gcd(r,\left.(q-1)\middle/d\right.)=1$ and
		\item $g(x)=x^rh(x)^{\left.(q-1)\middle/d\right.}$ permutes $\mu_d$, where $\mu_d=\{x\in\gf_q : x^d=1\}$.
	\end{enumerate}
\end{Lemma}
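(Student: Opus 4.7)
The plan is to lift everything to the multiplicative group $\gf_q^*$ and exploit the subgroup structure induced by the exponent $s = (q-1)/d$. Let $\psi \colon \gf_q^* \to \gf_q^*$ denote the $s$-th power map $\psi(x)=x^s$. Since $s \mid q-1$, the image of $\psi$ is the unique cyclic subgroup of $\gf_q^*$ of order $d$, which is exactly $\mu_d$, while the kernel $C := \{c \in \gf_q^* : c^s=1\}$ is the unique cyclic subgroup of order $s$ (equivalently, the set of $d$-th powers). The preparatory step is to record the intertwining identity
$$
f(x)^s \;=\; x^{rs}\,h(x^s)^s \;=\; (x^s)^r\, h(x^s)^s \;=\; g(\psi(x)),
$$
valid for every $x \in \gf_q^*$, together with $f(0)=0$ (as $r>0$), so that $f$ permutes $\gf_q$ if and only if it permutes $\gf_q^*$.

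For the direction $(\Rightarrow)$, assume $f$ permutes $\gf_q^*$. Since $f$ is nowhere zero on $\gf_q^*$, the polynomial $h$ has no zero on $\mu_d$; the intertwining identity then shows that $g$ maps $\mu_d$ into $\mu_d$. Surjectivity of $\psi \circ f$ onto $\mu_d$ now forces $g$ to be a bijection of this finite set, which establishes~(2). For~(1), fix any $y \in \mu_d$ and write the fiber $\psi^{-1}(y)$ as a coset $x_0 C$. On this fiber $f(x_0 c) = x_0^r h(y)\,c^r$, so injectivity of $f$ translates to the map $c \mapsto c^r$ being injective on the cyclic group $C$ of order $s$, which is equivalent to $\gcd(r,s)=1$.

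For the converse $(\Leftarrow)$, assume (1) and (2). Condition (2) already forces $h$ to be nowhere zero on $\mu_d$, so $f(x)\neq 0$ for $x\neq 0$. If $f(x_1)=f(x_2)$ with $x_1,x_2 \in \gf_q^*$, then raising to the $s$-th power and invoking (2) yields $\psi(x_1)=\psi(x_2)=:y$, so $x_1 = x_2 c$ for some $c\in C$; substituting back and cancelling $h(y)\neq 0$ gives $c^r=1$, and (1) forces $c=1$. The only real obstacle is the structural setup in the first paragraph — identifying $\mu_d$ as the image of the $s$-th power map, its kernel as the cyclic group $C$ of order $s$, and writing down the intertwining $\psi\circ f = g\circ \psi$; once these are in hand, both implications reduce to short cyclic-group arguments on fibers and cosets.
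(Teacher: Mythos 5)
Your proof is correct and complete: the intertwining identity $\psi\circ f = g\circ\psi$ with $\psi(x)=x^{(q-1)/d}$, together with the fiber/coset analysis over the kernel $\mu_{(q-1)/d}$, is exactly the standard argument for this criterion. The paper itself states this lemma without proof, citing the literature, and your argument matches the proof given in those references.
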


Let the unit circle of $\gf_{q^2}$ be defined by $$\mu_{q+1}:= \{ x\in\gf_{q^2} : x^{q+1}=1 \}.$$  The unit circle of $\gf_{q^2}$ has the following relation with the finite field $\gf_{q}$.
\begin{Lemma}
	\cite{lahtonen1995odd}
	\label{mu_Fq}
	Let $\gamma$ be any fixed element in $\gf_{q^2}\backslash\gf_{q}$. Then we have 
	$$\mu_{q+1}\backslash\{1\} = \left\{  \frac{x+\gamma}{x+\gamma^q}: x \in \gf_{q}   \right\}.$$
\end{Lemma}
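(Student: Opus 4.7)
The plan is to exhibit a bijection from $\gf_q$ onto $\mu_{q+1}\setminus\{1\}$ by defining
$$\phi: \gf_q \to \gf_{q^2}, \qquad \phi(x) = \frac{x+\gamma}{x+\gamma^q},$$
and then verifying three things: well-definedness, that $\phi(\gf_q)$ lands in $\mu_{q+1}\setminus\{1\}$, and injectivity. The surjectivity onto $\mu_{q+1}\setminus\{1\}$ then follows from a counting argument, since $\mu_{q+1}$ is the unique subgroup of order $q+1$ inside the cyclic group $\gf_{q^2}^*$ (as $(q+1)\mid (q^2-1)$), whence $|\mu_{q+1}\setminus\{1\}|=q=|\gf_q|$.

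First I would observe that $\phi$ is well-defined: for $x\in\gf_q$ one has $x+\gamma^q\neq 0$, since otherwise $\gamma^q=x\in\gf_q$ would force $\gamma\in\gf_q$, contradicting the hypothesis $\gamma\in\gf_{q^2}\setminus\gf_q$. Next I would verify the unit-circle condition by computing
$$\phi(x)^{q+1} \;=\; \phi(x)\cdot \phi(x)^q \;=\; \frac{x+\gamma}{x+\gamma^q}\cdot \frac{x+\gamma^q}{x+\gamma} \;=\; 1,$$
using $x^q=x$ and $\gamma^{q^2}=\gamma$. The value $1$ is excluded because $\phi(x)=1$ would give $\gamma=\gamma^q$, again forcing $\gamma\in\gf_q$.

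For injectivity, assuming $\phi(x_1)=\phi(x_2)$ and cross-multiplying produces, after expansion and cancellation in characteristic $2$, the identity $(x_1+x_2)(\gamma+\gamma^q)=0$. Since $\gamma\notin\gf_q$ we have $\gamma^q\neq \gamma$, hence $\gamma+\gamma^q\neq 0$, so $x_1=x_2$. Combined with the cardinality equality above, $\phi$ is a bijection from $\gf_q$ onto $\mu_{q+1}\setminus\{1\}$, which is exactly the claimed parametrization. The only nontrivial step is the observation that $\gamma+\gamma^q\neq 0$, but this is immediate from $\gamma\notin\gf_q$, so no real obstacle arises; the argument is essentially a direct computation together with a counting finish.
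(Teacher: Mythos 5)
Your proof is correct. The paper itself gives no proof of this lemma---it is simply cited from the reference \cite{lahtonen1995odd}---so there is nothing to compare against; your argument (well-definedness, the computation $\phi(x)^{q+1}=\phi(x)\phi(x)^q=1$, exclusion of the value $1$, injectivity via $(x_1+x_2)(\gamma+\gamma^q)=0$, and the counting finish using $|\mu_{q+1}\setminus\{1\}|=q$) is the standard and complete way to establish it.
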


The following lemma is about the solutions of a linear equation. The proof is easy and we omit it.
\begin{Lemma}
	\label{solutions_linear}
	Let $q=2^n$ and $\gcd(i,n)=1$. Then for any $a\in\gf_{q}$, the equation $x^{2^i}+x=a$ has solutions in $\gf_{q}$ if and only if $\tr_q(a)=0$. Moreover, when $\tr_q(a)=0$, the equation $x^{2^i}+x=a$ has exactly two solutions $x=x_0, x_0+1$  in $\gf_{q}$.
\end{Lemma}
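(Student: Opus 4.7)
The plan is to treat $L(x) = x^{2^i} + x$ as an $\mathbb{F}_2$-linear endomorphism of $\mathbb{F}_q$ and study it through the standard kernel/image correspondence with the trace form. The whole statement then reduces to (a) computing $\ker L$, (b) identifying the image of $L$ with the kernel of $\tr_q$, and (c) reading off the coset structure.

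First I would verify the necessity of the trace condition, which is immediate: if $x \in \mathbb{F}_q$ satisfies $x^{2^i}+x = a$, then applying $\tr_q$ and using the Frobenius invariance $\tr_q(x^{2^i}) = \tr_q(x)$ gives $\tr_q(a) = \tr_q(x) + \tr_q(x) = 0$. So $\image(L) \subseteq \ker(\tr_q)$.

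Next I would determine $\ker L$. An element $x \in \mathbb{F}_q$ lies in $\ker L$ exactly when $x^{2^i} = x$, i.e.\ when $x \in \mathbb{F}_{2^i} \cap \mathbb{F}_{2^n} = \mathbb{F}_{2^{\gcd(i,n)}}$. Since $\gcd(i,n)=1$, this intersection is $\mathbb{F}_2 = \{0,1\}$, so $|\ker L| = 2$ and $\dim_{\mathbb{F}_2} \ker L = 1$. By the rank–nullity theorem applied to the $\mathbb{F}_2$-linear map $L: \mathbb{F}_q \to \mathbb{F}_q$, we get $\dim_{\mathbb{F}_2} \image(L) = n-1$. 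Since $\tr_q: \mathbb{F}_q \to \mathbb{F}_2$ is a nonzero $\mathbb{F}_2$-linear form, $\ker(\tr_q)$ also has dimension $n-1$. Combined with the inclusion from the previous paragraph this forces $\image(L) = \ker(\tr_q)$, which gives the sufficiency of $\tr_q(a)=0$.

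For the final assertion, suppose $\tr_q(a)=0$ and let $x_0 \in \mathbb{F}_q$ be any one preimage, guaranteed by the previous step. The full solution set is the coset $x_0 + \ker L = \{x_0, x_0 + 1\}$ by the kernel computation, which gives exactly two distinct solutions in $\mathbb{F}_q$. There is no serious obstacle here; the only point that requires care is the gcd identity $\mathbb{F}_{2^i} \cap \mathbb{F}_{2^n} = \mathbb{F}_{2^{\gcd(i,n)}}$ inside an algebraic closure, which is why the hypothesis $\gcd(i,n)=1$ cannot be dropped.
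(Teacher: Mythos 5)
Your proof is correct and is exactly the standard kernel/image argument the authors had in mind; the paper simply states ``The proof is easy and we omit it,'' so there is nothing to compare against beyond noting that your computation of $\ker L=\gf_{2^{\gcd(i,n)}}=\gf_2$, the rank--nullity count, and the coset description of the solution set are all sound. No gaps.
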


\begin{Lemma}
	\cite{lidl1993dickson}
	\label{Dickson_property} Let $\mathbb{R}$ be a commutative ring with identity. The Dickson polynomial $D_k(x,a)$ of the first kind of degree $k$
	$$D_k(x,a)=\sum_{j=0}^{\lfloor \frac{k}{2} \rfloor}\frac{k}{k-j}\begin{pmatrix}
	k-j \\
	j
	\end{pmatrix}(-a)^jx^{k-2j}$$
	has the following properties:
	\begin{enumerate}[(1)]
		\item $D_k(x_1+x_2,x_1x_2)=x_1^k+x_2^k$, where $x_1,\,x_2$ are two indeterminates;
		\item $D_{k+2}(x,a)=xD_{k+1}(x,a)-aD_k(x,a)$;
		\item $D_{k\ell}(x,a)=D_{k}\left( D_{\ell}(x,a), a^{\ell} \right)$;
		\item if $\mathbb{R}=\gf_{2^n}$, then $D_{2^i}(x,a)=x^{2^i}$.
	\end{enumerate}
\end{Lemma}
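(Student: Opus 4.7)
My plan is to establish the four properties in the order (2), (1), (3), (4), since the recurrence drives a clean induction for the Waring-type identity, which in turn trivializes the remaining items. The only input available is the explicit coefficient formula defining $D_k(x,a)$, so the combinatorial work should be concentrated at the very beginning.

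First, I would verify the three-term recurrence (2) by direct coefficient comparison. Expanding each of $D_{k+2}(x,a)$, $xD_{k+1}(x,a)$ and $aD_k(x,a)$ in the monomial basis $\{(-a)^j x^{k+2-2j}\}$, one matches coefficients term by term; after re-indexing the $aD_k$ sum by $j \mapsto j+1$, the required equality reduces to a standard Pascal-type binomial identity among the closed-form coefficients $\tfrac{m}{m-j}\binom{m-j}{j}$. This coefficient bookkeeping is the main technical obstacle of the lemma; every subsequent step is essentially a formal consequence.

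With (2) in hand, property (1) follows by induction on $k$. The base cases $k=0,1$ match directly: $D_0(x,a)=2=x_1^0+x_2^0$ and $D_1(x,a)=x=x_1+x_2$. For the inductive step, substituting $x=x_1+x_2$ and $a=x_1 x_2$ into (2) and invoking the inductive hypothesis at levels $k$ and $k+1$ gives
\[
D_{k+2}(x_1+x_2,\, x_1 x_2) = (x_1+x_2)(x_1^{k+1}+x_2^{k+1}) - x_1 x_2(x_1^{k}+x_2^{k}) = x_1^{k+2}+x_2^{k+2},
\]
which closes the induction.

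Property (3) then drops out as a polynomial identity in $\mathbb{Z}[x_1,x_2]$: applying (1) twice yields
\[
D_{k\ell}(x_1+x_2,\, x_1 x_2) = x_1^{k\ell}+x_2^{k\ell} = D_k\bigl(x_1^{\ell}+x_2^{\ell},\, (x_1 x_2)^{\ell}\bigr) = D_k\bigl(D_{\ell}(x_1+x_2,\, x_1 x_2),\, (x_1 x_2)^{\ell}\bigr),
\]
and this universal identity in two indeterminates specializes to any commutative $\mathbb{R}$, using a quadratic extension if needed to realize a generic pair $(x,a)$ as $(x_1+x_2,\, x_1 x_2)$. Finally, (4) follows in two lines: in characteristic two the coefficient formula yields $D_2(x,a)=x^2-2a=x^2$, and iterating (3) gives $D_{2^i}(x,a)=D_2\bigl(D_{2^{i-1}}(x,a),\, a^{2^{i-1}}\bigr)=D_{2^{i-1}}(x,a)^2$, so an induction on $i$ delivers $D_{2^i}(x,a)=x^{2^i}$.
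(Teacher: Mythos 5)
Your proof is correct, but note that the paper does not prove this lemma at all: it is quoted as a known result from the standard reference \cite{lidl1993dickson}, so any self-contained derivation is by definition a different route from the paper's. Your route (recurrence (2) first by coefficient comparison, then the Waring identity (1) by induction, then (3) as a universal polynomial identity, then (4) by specializing to characteristic two) is the standard textbook development and is sound. Two details are worth filling in. First, item (2) carries all the combinatorial content and you leave it as ``a standard Pascal-type identity''; concretely, with $c_{m,j}=\tfrac{m}{m-j}\binom{m-j}{j}$ the required identity is $c_{k+2,j}=c_{k+1,j}+c_{k,j-1}$, which follows from the rewriting $c_{m,j}=\binom{m-j}{j}+\binom{m-j-1}{j-1}$ and Pascal's rule, plus a check of the boundary indices and the convention $D_0(x,a)=2$ (the displayed coefficient formula is formally $0/0$ at $k=j=0$, so the base case of your induction rests on that convention). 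Second, for (3) the cleanest way to descend from the verified identity in $\mathbb{Z}[x_1,x_2]$ to an identity in $\mathbb{Z}[x,a]$ is to observe that $x_1+x_2$ and $x_1x_2$ are algebraically independent over $\mathbb{Z}$, so the substitution $x\mapsto x_1+x_2$, $a\mapsto x_1x_2$ is injective on polynomial rings; this avoids the slightly delicate appeal to quadratic extensions of an arbitrary commutative ring, and the resulting identity in $\mathbb{Z}[x,a]$ specializes to every $\mathbb{R}$. With these points made explicit your argument is complete and supplies exactly what the paper needs for its Lemma on $D_{2^i-1}$.
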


By the above lemma, the Dickson polynomial of degree $k=2^i-1$ over $\gf_{2^n}$ can be explicitly given.  

\begin{Lemma}
	\label{Dickson_2i-1}For any positive integer $i$ and element $a\in\gf_{2^n}$, 
	\begin{equation}
	\label{Dickson}D_{2^i-1}(x,a)=\sum_{j=0}^{i-1}a^{2^{j}-1}x^{2^i-2^{j+1}+1}.
	\end{equation}
\end{Lemma}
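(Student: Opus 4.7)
The plan is to prove the explicit formula by induction on $i$, exploiting the functional characterization of Dickson polynomials given as property (1) of Lemma~\ref{Dickson_property}. Denote
$$P_i(x,a):=\sum_{j=0}^{i-1}a^{2^j-1}\,x^{2^i-2^{j+1}+1}.$$
Since $D_{2^i-1}$ is uniquely determined among polynomials over $\gf_2$ by the identity $D_{2^i-1}(u+v,\,uv)=u^{2^i-1}+v^{2^i-1}$ in indeterminates $u,v$ (the symmetric functions $u+v$ and $uv$ being algebraically independent), it suffices to verify $P_i(u+v,uv)=u^{2^i-1}+v^{2^i-1}$ as a polynomial identity in $\gf_2[u,v]$. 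The claim of the lemma then follows by specializing the indeterminates to elements of $\gf_{2^n}$.

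The base case $i=1$ is immediate: $P_1(u+v,uv)=(uv)^{0}(u+v)=u+v$. For the inductive step, I will use the following characteristic-$2$ factorization, which is the combinatorial core of the argument:
$$u^{2^{i+1}-1}+v^{2^{i+1}-1}=(u+v)^{2^i}\bigl(u^{2^i-1}+v^{2^i-1}\bigr)+(uv)^{2^i-1}(u+v).$$
This identity comes from expanding $(u^{2^i}+v^{2^i})(u^{2^i-1}+v^{2^i-1})$, recognizing the cross terms as $u^{2^i-1}v^{2^i-1}(u+v)$, and using $(u+v)^{2^i}=u^{2^i}+v^{2^i}$. Applying the induction hypothesis, the first summand becomes
$$(u+v)^{2^i}P_i(u+v,uv)=\sum_{j=0}^{i-1}(uv)^{2^j-1}(u+v)^{2^{i+1}-2^{j+1}+1},$$
which contributes precisely the $j=0,1,\ldots,i-1$ terms of $P_{i+1}(u+v,uv)$; the leftover summand $(uv)^{2^i-1}(u+v)$ supplies the missing $j=i$ term, and the induction closes.

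The main (rather modest) obstacle will be the clean bookkeeping of exponents in the inductive step — specifically, confirming that multiplying $P_i$ by $(u+v)^{2^i}$ shifts the $x$-exponents from $2^i-2^{j+1}+1$ to $2^{i+1}-2^{j+1}+1$ in agreement with the shape of $P_{i+1}$. Everything else is routine characteristic-$2$ algebra. As a sanity check I would separately verify the result directly for $i=2,3$ by expanding both sides.

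As an alternative route I could instead start from the explicit formula $D_k(x,a)=\sum_j\tfrac{k}{k-j}\binom{k-j}{j}(-a)^j x^{k-2j}$ with $k=2^i-1$, rewrite the leading coefficient using $\tfrac{k}{k-j}\binom{k-j}{j}=\binom{k-j}{j}+\binom{k-j-1}{j-1}$, and use Lucas' theorem to show that this quantity is odd exactly when $j=2^t-1$ for some $0\le t\le i-1$. This route is feasible but requires a somewhat delicate binary-digit case analysis, so I prefer the self-contained induction above.
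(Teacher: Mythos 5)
Your proof is correct, and it takes a genuinely different route from the paper's. The paper induces on $i$ using properties (2)--(4) of Lemma \ref{Dickson_property}: it first computes $D_{2^i-2}(x,a)=D_{2^{i-1}-1}(x^2,a^2)$ from the composition rule and the induction hypothesis, then solves the recurrence $D_{2^i}(x,a)=xD_{2^i-1}(x,a)+aD_{2^i-2}(x,a)$ together with $D_{2^i}(x,a)=x^{2^i}$ for $D_{2^i-1}$ (formally dividing by $x$, which works because the bracketed expression has no constant term). You instead use only property (1), the functional equation $D_k(u+v,uv)=u^k+v^k$, combined with the algebraic independence of the elementary symmetric polynomials to reduce the lemma to the identity $P_i(u+v,uv)=u^{2^i-1}+v^{2^i-1}$; your doubling identity
\begin{equation*}
u^{2^{i+1}-1}+v^{2^{i+1}-1}=(u+v)^{2^i}\bigl(u^{2^i-1}+v^{2^i-1}\bigr)+(uv)^{2^i-1}(u+v)
\end{equation*}
is verified correctly, the exponent shift $2^i-2^{j+1}+1\mapsto 2^{i+1}-2^{j+1}+1$ matches the shape of $P_{i+1}$, and the leftover term is exactly the $j=i$ summand, so the induction closes. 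Your version is arguably cleaner: it avoids the composition and recurrence properties, avoids the division by $x$, and the uniqueness step (injectivity of $X\mapsto u+v$, $Y\mapsto uv$) is standard. The paper's version has the advantage of staying entirely within the toolbox it has already set up in Lemma \ref{Dickson_property} and of manipulating one-variable expressions in $x$ and $a$ directly, which is the form in which the lemma is later applied. Both are complete proofs.
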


\begin{proof}
	We prove the statement by induction. It is clear that (\ref{Dickson}) holds for $i=1$ since $D_1(x,a)= x$. Suppose that (\ref{Dickson}) holds for $i-1$, namely, 
		\begin{equation}
	\label{Dickson_i-1}D_{2^{i-1}-1}(x,a)=\sum_{j=0}^{i-2}a^{2^{j}-1}x^{2^{i-1}-2^{j+1}+1}.
	\end{equation}
	By Lemma \ref{Dickson_property} (3) and (4), we have 
	\begin{eqnarray*}
	D_{2^i-2}(x,a) &=& D_{{2^{i-1}-1}} \left( D_2(x,a) ,a^2\right) \\
	&=&  D_{{2^{i-1}-1}} \left( x^2 ,a^2\right) \\
	&=& \sum_{j=0}^{i-2}a^{2(2^{j}-1)}x^{2(2^{i-1}-2^{j+1}+1)}\\
	&=& \sum_{j=1}^{i-1}a^{2^{j}-2}x^{2^{i}-2^{j+1}+2}.
	\end{eqnarray*}
In addition, according to the second item of Lemma \ref{Dickson_property}, 
$$D_{2^i}(x,a) = xD_{2^i-1}(x,a)+aD_{2^i-2}(x,a).$$
Thus, 
\begin{eqnarray*}
D_{2^i-1}(x,a) &=& x^{-1}\left( x^{2^i} +  aD_{2^i-2}(x,a) \right) \\
&=& x^{-1}\left( x^{2^i} + a\sum_{j=1}^{i-1}a^{2^{j}-2}x^{2^{i}-2^{j+1}+2} \right) \\
&=& \sum_{j=0}^{i-1}a^{2^{j}-1}x^{2^i-2^{j+1}+1},
\end{eqnarray*}
which implies that (\ref{Dickson}) holds for the $i$ case. 
Therefore, the desired conclusion follows. 
\end{proof}

In the end of this section, we provide a lemma about some properties of the elements in $\Gamma$ defined by (\ref{Gamma}), which will be heavily used in the proof of the main theorem.

\begin{Lemma}
	\label{property of Gamma}
	Let $q=2^n$ with $n$ odd and $\gcd(i,n)=1$.	
	For any element $(\alpha,\beta)$ in $\Gamma$ in (\ref{Gamma}), the elements $\varphi_1, \varphi_2, \varphi_3, \varphi_4$ defined by (\ref{varphi}) satisfy the following properties:
	\begin{enumerate}[(1)]
		\item $(\varphi_1+\varphi_4)(\varphi_2+\varphi_4)(\varphi_3+\varphi_4)\varphi_3\neq0$, $\left(\frac{\varphi_4}{\varphi_2+\varphi_4}\right)^{2^i} = \frac{\varphi_4}{\varphi_1+\varphi_4} $ and $\left( \frac{\varphi_1+\varphi_4}{\varphi_2+\varphi_4} \right)^{\frac{1}{2^i-1}}=\frac{\varphi_2+\varphi_4}{\varphi_4}$;
		\item when $i$ is even, $\tr_q\left(\frac{\varphi_3}{\varphi_4}\right)=0$; moreover, the equation $$x^{2^i}+x+\frac{\varphi_3+\varphi_4}{\varphi_4}=0$$ have two solutions  $\frac{\varphi_2+\varphi_4}{\varphi_4}\alpha$ and $\frac{\varphi_2+\varphi_4}{\varphi_4}\alpha+1$ in $\gf_q$;
		\item when $i$ is odd, $\tr_q\left(\frac{\varphi_3}{\varphi_4}\right)=1$;
		\item $\tr_q\left(\frac{\varphi_2}{\varphi_4}\right)=0$.
	\end{enumerate}
\end{Lemma}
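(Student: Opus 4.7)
The plan is to reduce the four claims to polynomial identities in $\alpha$, $\alpha^{2^i}$ and $\beta$ and exploit the defining relation $\varphi_2^{2^i}=\varphi_1\varphi_4^{2^i-1}$ characterising membership in $\Gamma$. Throughout I rely on the following algebraic identities, which hold in characteristic two:
\begin{align*}
\varphi_3 &= (\epsilon_1+\epsilon_4)^2, & \varphi_4 &= (\epsilon_1+\epsilon_2+\epsilon_3+\epsilon_4)^2,\\
\varphi_3+\varphi_4 &= (\epsilon_2+\epsilon_3)^2, & \varphi_1+\varphi_2 &= (\epsilon_1+\epsilon_4)(\epsilon_2+\epsilon_3).
\end{align*}
A direct calculation from \eqref{ieven} then gives $\{\epsilon_1+\epsilon_4,\,\epsilon_2+\epsilon_3\}=\{\alpha^{2^i+1}+\beta+1,\,\alpha^{2^i}+\alpha\}$, the pairing being dictated by the parity of $i$, and $\epsilon_1+\epsilon_2+\epsilon_3+\epsilon_4=(\alpha+1)^{2^i+1}+\beta$.

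For part (1), I would first note that $\left(\frac{\varphi_4}{\varphi_2+\varphi_4}\right)^{2^i}=\frac{\varphi_4}{\varphi_1+\varphi_4}$ is equivalent, after cross-multiplication and the use of Frobenius, to $\varphi_4^{2^i-1}\varphi_1=\varphi_2^{2^i}$, i.e.\ exactly the defining relation of $\Gamma$. The third identity of (1) follows by raising the first to the $\frac{1}{2^i-1}$-th power and rearranging. For the non-vanishing claim $(\varphi_1+\varphi_4)(\varphi_2+\varphi_4)(\varphi_3+\varphi_4)\varphi_3\neq 0$, the factors $\varphi_3$ and $\varphi_3+\varphi_4$ are squares of the sums above and must be shown nonzero using $\gcd(2^i-1,q-1)=1$ in conjunction with $\varphi_4\neq 0$; the factors $\varphi_j+\varphi_4$ for $j=1,2$ are then handled via the $\Gamma$-relation, since $\varphi_j+\varphi_4=0$ for a single $j$ forces both to vanish, and then $\varphi_1+\varphi_2=(\epsilon_1+\epsilon_4)(\epsilon_2+\epsilon_3)=0$ together with $\varphi_4\neq 0$ yields a contradiction.

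For parts (2)--(4), the traces are computed directly. Using $\tr_q(x)=\tr_q(x^2)$, we have
\begin{equation*}
\tr_q\!\left(\tfrac{\varphi_3}{\varphi_4}\right)=\tr_q\!\left(\tfrac{\epsilon_1+\epsilon_4}{\epsilon_1+\epsilon_2+\epsilon_3+\epsilon_4}\right),
\end{equation*}
which becomes $\tr_q\bigl((\alpha^{2^i+1}+\beta+1)/((\alpha+1)^{2^i+1}+\beta)\bigr)$ for even $i$ and $\tr_q\bigl((\alpha^{2^i}+\alpha)/((\alpha+1)^{2^i+1}+\beta)\bigr)$ for odd $i$. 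These are to be evaluated by expanding the numerators as $\gf_2$-linear combinations of Frobenius conjugates of $\alpha$ via the Dickson polynomial $D_{2^i-1}$ from Lemma \ref{Dickson_2i-1}, and by applying Lemma \ref{solutions_linear} together with $\tr_q(1)=1$ (since $n$ is odd) to read off the answer. For the second half of (2), once $\tr_q(\varphi_3/\varphi_4)=0$ is in hand, checking that $x_0:=\frac{\varphi_2+\varphi_4}{\varphi_4}\alpha$ solves $x^{2^i}+x+(\varphi_3+\varphi_4)/\varphi_4=0$ amounts to substituting and applying $\bigl(\frac{\varphi_2+\varphi_4}{\varphi_4}\bigr)^{2^i}=\frac{\varphi_1+\varphi_4}{\varphi_4}$ from part (1); the equation then reduces to the identity $(\varphi_1+\varphi_4)\alpha^{2^i}+(\varphi_2+\varphi_4)\alpha=\varphi_3+\varphi_4$, which is verified term-by-term from the explicit expressions of the $\varphi_j$.

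The hard part is expected to be the trace evaluations in (2)--(4), since $\beta$ cannot be eliminated from the relation $\varphi_2^{2^i}=\varphi_1\varphi_4^{2^i-1}$ in closed form. The strategy here is to view $\xi=\varphi_2/\varphi_4$ as a solution of $\xi^{2^i}+\xi=(\varphi_1+\varphi_2)/\varphi_4$, to observe that the right-hand side equals $u(u+1)$ with $u=(\epsilon_1+\epsilon_4)/(\epsilon_1+\epsilon_2+\epsilon_3+\epsilon_4)$ and therefore has zero trace, and then to use Lemma \ref{solutions_linear} to pin down $\xi$ modulo the ambiguity $\xi\leftrightarrow \xi+1$. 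Careful bookkeeping on the parity of $i$ is needed throughout, since the pair $(\epsilon_1+\epsilon_4,\epsilon_2+\epsilon_3)$ swaps roles between the even and odd cases, and this swap is precisely what makes the trace in (2) vanish while in (3) it equals $1$.
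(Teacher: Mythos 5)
Your structural identities are the right ones (they are exactly the paper's $\varphi_1+\varphi_2=(\alpha^{2^i}+\alpha)(\alpha^{2^i+1}+\beta+1)$ and $\{\varphi_3,\varphi_3+\varphi_4\}=\{(\alpha^{2^i}+\alpha)^2,(\alpha^{2^i+1}+\beta+1)^2\}$), and your treatment of the two displayed equalities in (1) and of the second half of (2) --- reducing it via $\left(\frac{\varphi_2+\varphi_4}{\varphi_4}\right)^{2^i}=\frac{\varphi_1+\varphi_4}{\varphi_4}$ to the identity $(\varphi_1+\varphi_4)\alpha^{2^i}+(\varphi_2+\varphi_4)\alpha=\varphi_3+\varphi_4$ --- coincides with the paper's argument. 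The genuine gap is in part (4), exactly where the lemma is hard. You correctly observe that $\xi=\varphi_2/\varphi_4$ satisfies $\xi^{2^i}+\xi=(\varphi_1+\varphi_2)/\varphi_4=u^2+u$ with $u^2=\varphi_3/\varphi_4$, so the right-hand side has trace $0$ and the equation has exactly two solutions $x_0$ and $x_0+1$. But you stop at ``pin down $\xi$ modulo the ambiguity $\xi\leftrightarrow\xi+1$,'' and that ambiguity \emph{is} the entire content of (4): the two candidate solutions have different traces, so Lemma \ref{solutions_linear} cannot decide $\tr_q(\xi)$. The paper needs a separate, nontrivial argument here: it shows $(\varphi_2/\varphi_4)^2$ equals either $w+w^2$ or $w+w^2+1$ with $w=\frac{\varphi_2+\varphi_4}{\varphi_4}\alpha$, and eliminates the second branch by solving the resulting quadratic for $\varphi_2/\varphi_4$ and deriving, via $\left(\frac{\varphi_1+\varphi_4}{\varphi_2+\varphi_4}\right)^{\frac{1}{2^i-1}}=\frac{\varphi_2+\varphi_4}{\varphi_4}$, the impossible identity $\alpha=\frac{\alpha}{\alpha^2+1}$. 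Nothing in your outline supplies this step.

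Two smaller problems. In the non-vanishing claim of (1), from $\varphi_1+\varphi_4=\varphi_2+\varphi_4=0$ you pass to $\varphi_1+\varphi_2=0$ and assert a contradiction with $\varphi_4\neq0$; there is none, since $\varphi_1+\varphi_2=(\epsilon_1+\epsilon_4)(\epsilon_2+\epsilon_3)$ vanishes whenever $\epsilon_1=\epsilon_4$, which is compatible with $\varphi_4=(\epsilon_1+\epsilon_2+\epsilon_3+\epsilon_4)^2\neq0$. The paper instead deduces $\alpha^{2^i}=\alpha$, hence $\alpha=1$ because $\gcd(i,n)=1$ and $\alpha\neq 0$, and then computes $\varphi_1+\varphi_4=\alpha\beta\neq0$ directly. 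Second, your Dickson-polynomial route to the traces in (2)--(3) is only named, not carried out, and it is a detour: the paper gets the trace in (2) for free from the very identity you use for the second half of (2) (it exhibits $\frac{\varphi_3+\varphi_4}{\varphi_4}$ as $w+w^{2^i}$, hence of trace zero), and gets (3) from $\varphi_{3,\mathrm{even}}+\varphi_{3,\mathrm{odd}}=\varphi_4$ together with $\tr_q(1)=1$.
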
 

\begin{proof}
	From the definition of $\varphi_1, \varphi_2, \varphi_3, \varphi_4$ in \eqref{varphi}, it follows that
	\begin{equation}
	\label{varphi_even_long}	
	\left\{
	\begin{array}{lr}
	\varphi_1= \alpha^{2^{i+1}+2} + \alpha^{2^{i+1}}+\alpha^{2^i+2}+\alpha^{2^i}+\alpha^2 +\alpha\beta + \beta^2+1  \\ 
	\varphi_2=  \alpha^{2^{i+1}+2} + \alpha^{2^{i+1}+1}+\alpha^{2^{i+1}}+\alpha^{2}+\alpha +\alpha^{2^i}\beta + \beta^2+1 \\
	\varphi_4=  \alpha^{2^{i+1}+2} + \alpha^{2^{i+1}}+ \alpha^2 + \beta^2+1.
	\end{array}
	\right.
	\end{equation} 
	and 
	\begin{equation}
	\label{varphi_3}	
	\varphi_3 =\begin{cases}
	\alpha^{2^{i+1}+2} + \beta^2+1 &\text{ for even } i \\
	\alpha^{2^{i+1}} + \alpha^2  & \text{ for odd } i.
	\end{cases}
	\end{equation} 
	The above equations imply \begin{equation}\label{Eq-vaphi1+2}
	\varphi_1+\varphi_2 = \alpha^{2^i+2}+\alpha^{2^i}+\alpha\beta + \alpha^{2^{i+1}+1}+\alpha^{2^i}\beta+\alpha=(\alpha^{2^i}+\alpha)( \alpha^{2^i+1}+\beta+1)
	\end{equation}
	and 
	\begin{equation}\label{Eq-varphi3+4}
	\Big\{\varphi_3, \varphi_3+\varphi_4\Big\} = \left\{(\alpha^{2^i}+\alpha)^2, ( \alpha^{2^i+1}+\beta+1 )^2\right\}.	
	\end{equation}
	
	(1)
	Recall from the definition of $\Gamma$ that $\varphi_2^{2^i}=\varphi_1\varphi_4^{2^i-1}$. Suppose $(\varphi_1+\varphi_4)(\varphi_2+\varphi_4)=0$. Then it is clear that $\varphi_1=\varphi_2=\varphi_4$. 
	Thus $\beta=\alpha^{2^i+1}+1$ or $\alpha^{2^i}+\alpha=0$. In fact, if $\beta=\alpha^{2^i+1}+1$,  then $\varphi_1+\varphi_4=\alpha^{2^i+2}+\alpha^{2^i}+\alpha\beta=\alpha^{2^i}+\alpha =0$.  Thus we always have $\alpha^{2^i}+\alpha=0$. This implies $\varphi_1+\varphi_4=\alpha^{2^i+2}+\alpha^{2^i}+\alpha\beta = \alpha\beta = 0$, which is in contradiction with the assumption that $\alpha\beta\neq0$ in the definition of $\Gamma$.
	Suppose $\varphi_3(\varphi_3+\varphi_4)=0$, we obtain $(\alpha^{2^i}+\alpha)(\alpha^{2^i}+\alpha)=0$. This leads to the contradiction $\alpha\beta =0$ in an exactly similar manner.
	The equalities 
	$$
	\left(\frac{\varphi_4}{\varphi_2+\varphi_4}\right)^{2^i} = \frac{\varphi_4}{\varphi_1+\varphi_4} \text{ and } \left( \frac{\varphi_1+\varphi_4}{\varphi_2+\varphi_4}\right)^{\frac{1}{2^i-1}}=\frac{\varphi_2+\varphi_4}{\varphi_4}
	$$
	can be easily verified by the relation $\varphi_2^{2^i} = \varphi_1\varphi_4^{2^i-1}$ in the definition of $\Gamma$.
	
	
	
	(2) From \eqref{varphi_even_long} and \eqref{varphi_3}, we have
	\begin{subequations} 
		\renewcommand\theequation{\theparentequation.\arabic{equation}}  
		\label{varphi_1+2+3+4}   
		\begin{empheq}[left={\empheqlbrace\,}]{align}
		&  \varphi_1+\varphi_4 = \alpha^{2^i+2}+\alpha^{2^i}+\alpha\beta  \label{lemma_eq_1} \\ 
		& \varphi_2+\varphi_4 = \alpha^{2^{i+1}+1}+\alpha^{2^i}\beta+\alpha \label{lemma_eq_2}  \\
		& \varphi_3+\varphi_4 = \alpha^{2^{i+1}}+\alpha^2. \label{lemma_eq_3} 
		\end{empheq}
	\end{subequations}
	From the above equalities, it is easy to verify that 
	\begin{equation}
	\label{lemma_eq_5}
	\alpha\left( \varphi_2+\varphi_4\right)+ \alpha^{2^i}\left(\varphi_1+\varphi_4 \right)=\varphi_3+\varphi_4.
	\end{equation}
	Moreover, using $\left(\frac{\varphi_4}{ \varphi_2+\varphi_4}\right)^{2^i} = \frac{\varphi_4}{\varphi_1+\varphi_4} $, we have
	\begin{equation}
	\label{v3v4}
	\frac{\varphi_3+\varphi_4}{\varphi_4}=\frac{\varphi_2+\varphi_4}{\varphi_4}\alpha+ \frac{\varphi_1+\varphi_4}{\varphi_4}\alpha^{2^i}= \frac{\varphi_2+\varphi_4}{\varphi_4}\alpha+ \left( \frac{\varphi_2+\varphi_4}{\varphi_4}\alpha \right)^{2^i}.
	\end{equation}
	Thus, 
	$$\tr_q\left(\frac{\varphi_3+\varphi_4}{\varphi_4}\right)=\tr_q\left(\frac{\varphi_3}{\varphi_4}\right)+\tr_q(1)=0.$$ Furthermore, from Lemma \ref{solutions_linear},  the solutions in $\gf_{q}$ of $x^{2^i}+x=\frac{\varphi_3+\varphi_4}{\varphi_4}$ are  $\frac{\varphi_2+\varphi_4}{\varphi_4}\alpha$ and $\frac{\varphi_2+\varphi_4}{\varphi_4}\alpha+1$.

	(3) From the expressions of $\varphi_3, \varphi_4$ in \eqref{varphi_even_long}, \eqref{varphi_3}, it is easily seen that 
	$\varphi_4 = \varphi_{3,e}+\varphi_{3,o}$, where $\varphi_{3,e}$, $\varphi_{3,o}$ denotes $\varphi_3$ for even $i$ and for odd $i$, respectively.
	Since $n$ is an odd integer, we have
	$$
	\tr_q\left(\frac{\varphi_{3,o}}{\varphi_4}\right) +\tr_q\left(\frac{\varphi_{3,e}}{\varphi_4}\right) = \tr_q(1) = 1.
	$$
	The desired assertion follows from the fact $\tr_q\left(\frac{\varphi_{3,e}}{\varphi_4}\right)=0$ proved in (2).

	(4)  From \eqref{Eq-vaphi1+2} and \eqref{Eq-varphi3+4}, it is easily seen that
	\begin{equation*}
	\varphi_1^2+\varphi_2^2=\varphi_3^2+\varphi_3\varphi_4,
	\end{equation*}
	i.e.,
	\begin{equation}
	\label{lemma_eq_6}
	\left(\frac{\varphi_1}{\varphi_4}\right)^2+\left(\frac{\varphi_2}{\varphi_4}\right)^2=\left(\frac{\varphi_3}{\varphi_4}\right)^2+\frac{\varphi_3}{\varphi_4}.
	\end{equation}
	Plugging $\frac{\varphi_1}{\varphi_4} = \left(\frac{\varphi_2}{\varphi_4}\right)^{2^i}$ into Eq. (\ref{lemma_eq_6}), we get 
	$$ \left(\frac{\varphi_2}{\varphi_4}\right)^{2^{i+1}} +  \left(\frac{\varphi_2}{\varphi_4}\right)^2=\left(\frac{\varphi_3}{\varphi_4}\right)^2+\frac{\varphi_3}{\varphi_4}=\left(\frac{\varphi_3+\varphi_4}{\varphi_4}\right)^2+\frac{\varphi_3+\varphi_4}{\varphi_4}. $$
	By the relation between $\varphi_4$ and $\varphi_3$ for even and odd $i$, it is clear that the expression on the right side of the above equation is independent of the parity of the integer $i$.
	W.L.O.G., we can assume that $i$ is even, since the case $i$ odd can be proved by just replacing $\varphi_3$ by $\varphi_3+\varphi_4$. 
	Together with (\ref{v3v4}), we have 
	$$ \left(\frac{\varphi_2}{\varphi_4}\right)^{2^{i+1}} +  \left(\frac{\varphi_2}{\varphi_4}\right)^2=\left(  \frac{\varphi_2+\varphi_4}{\varphi_4}\alpha+\left( \frac{\varphi_2+\varphi_4}{\varphi_4}\alpha\right)^2 \right)^{2^i} + \frac{\varphi_2+\varphi_4}{\varphi_4}\alpha+\left( \frac{\varphi_2+\varphi_4}{\varphi_4}\alpha\right)^2 . $$ 
	Therefore, $$ \left(\frac{\varphi_2}{\varphi_4}\right)^2 = \frac{\varphi_2+\varphi_4}{\varphi_4}\alpha+\left( \frac{\varphi_2+\varphi_4}{\varphi_4}\alpha\right)^2$$
	or 
	\begin{equation}
	\label{lemma_eq_7} \left(\frac{\varphi_2}{\varphi_4}\right)^2 = \frac{\varphi_2+\varphi_4}{\varphi_4}\alpha+\left( \frac{\varphi_2+\varphi_4}{\varphi_4}\alpha\right)^2+1.
	\end{equation}
	
	If Eq. (\ref{lemma_eq_7}) holds, then  
	\begin{equation}
	\label{lemma_eq_8}
	\left(1+\alpha^2\right)\left( \frac{\varphi_2}{\varphi_4} \right)^2 + \alpha\left(  \frac{\varphi_2}{\varphi_4}  \right) + \alpha^2+\alpha+1=0.
	\end{equation}
	If $\alpha=1$, it is easy to obtain that $\beta=1$ from the definition of $\Gamma$. Moreover, $\varphi_2=0$ and thus $\tr_q\left(\frac{\varphi_2}{\varphi_4}\right)=0.$ In the following, we assume that $\alpha\neq 1$. Then after multiplying Eq. (\ref{lemma_eq_8}) by $\frac{\alpha^2+1}{\alpha^2}$  and simplifying, we get
	$$ \left( \frac{\alpha^2+1}{\alpha} \cdot \frac{\varphi_2}{\varphi_4}  \right)^2+ \frac{\alpha^2+1}{\alpha} \cdot \frac{\varphi_2}{\varphi_4} = \left( \frac{\alpha^2+1}{\alpha} \right)^2+\frac{\alpha^2+1}{\alpha}$$
	and thus 
	$$\frac{\varphi_2}{\varphi_4}=1 ~~\text{or}~~ \frac{\alpha^2+\alpha+1}{\alpha^2+1}.  $$
	It is clear that $\varphi_2+\varphi_4\neq0$. Suppose we have  $\frac{\varphi_2}{\varphi_4}=\frac{\alpha^2+\alpha+1}{\alpha^2+1}$. Moreover, $\frac{\varphi_2+\varphi_4}{\varphi_4}=\frac{\alpha}{\alpha^2+1}$ and $\frac{\varphi_1+\varphi_4}{\varphi_4}=\left( \frac{\varphi_2+\varphi_4}{\varphi_4} \right)^{2^{i}}=\frac{\alpha^{2^i}}{\alpha^{2^{i+1}}+1}.$ Furthermore, $\frac{\varphi_1+\varphi_4}{\varphi_2+\varphi_4}=\alpha^{2^i-1}$ and thus $\frac{\varphi_2+\varphi_4}{\varphi_4}=\left( \frac{\varphi_1+\varphi_4}{\varphi_2+\varphi_4} \right)^{\frac{1}{2^i-1}} = \alpha =\frac{\alpha}{\alpha^2+1},$ which is impossible. 
	
	Therefore, Eq. (\ref{lemma_eq_7}) does not hold and thus 
	\begin{equation}
	\label{2/4}
	\left(\frac{\varphi_2}{\varphi_4}\right)^2 = \frac{\varphi_2+\varphi_4}{\varphi_4}\alpha+\left( \frac{\varphi_2+\varphi_4}{\varphi_4}\alpha\right)^2.
	\end{equation}
	Clearly, $$\tr_q\left(\frac{\varphi_2}{\varphi_4}\right)=0.$$

\end{proof}

\section{The permutation property of the butterfly structure}

\label{permutation}

In this section, we firstly give a general necessary and sufficient condition about the permutation property of the function $V_i$ from the closed butterfly. Throughout what follows, we always assume $n$ is an odd integer.

\medskip

Recall that the univariate representation of $V_i$ have the following form
\begin{equation}
\label{f(x)}
f(x)=\epsilon_1x^{q\cdot(2^i+1)}+ \epsilon_2 x^{q\cdot 2^i+1} +  \epsilon_3x^{2^i+q} +  \epsilon_4 x^{2^i+1}, \quad \epsilon_j\in \mathbb{F}_q.
\end{equation}
Below we first present a necessary and sufficient conditions for $f(x)$ to be a permutation of $\mathbb{F}_{q^2}$ without 
imposing any additional restrictions on $\epsilon_j$. For simplicity of presentation, we denote
\begin{equation}	
\label{varphi_permutation}
\left\{
\begin{array}{lr}
\varphi_1=\epsilon_1\epsilon_3+\epsilon_2\epsilon_4  \\ 
\varphi_2=\epsilon_1\epsilon_2+\epsilon_3\epsilon_4 \\
\varphi_3=\epsilon_1^2+\epsilon_4^2 \\
 \varphi_4=\epsilon_1^2+\epsilon_2^2+\epsilon_3^2+\epsilon_4^2.
\end{array}
\right.
\end{equation}

The following proposition investigates the permutation property of $f(x)$ defined by (\ref{f(x)}) over $\gf_{q^2}$. 
\begin{Prop}
	\label{permutation_Vi}
	Let $q=2^n$, $f(x)$ be defined by (\ref{f(x)}), $h(x)=\epsilon_1x^{2^i+1}+\epsilon_2x^{2^i}+\epsilon_3x+\epsilon_4$ and $g(x)=x^{2^i+1}h(x)^{q-1}$. Define $\mu_{q+1} = \{ x\in\gf_{q^2} : x^{q+1}=1 \}$ and $$T = \left\{ \left( \frac{xy+1}{x+y}, \frac{xy}{x^2+y^2} \right) ~~\bigg|~~ x,y\in\mu_{q+1}\backslash\{1\}, y\neq x, x^q \right\}\subset \gf_{q}^2.$$ Then $f(x)$ permutes $\gf_{q^2}$ if and only if
	\begin{enumerate}[(1)]
		\item $\gcd\left( 2^i+1, q-1 \right)=1$;
		\item $h(x)=0$ has no solution in $\mu_{q+1}$;
		\item $g(x)=1$ if and only if $x=1$;
		\item   there does not exist some $(X,Y)\in T$ such that the following equation holds: 
		\begin{equation}
		\varphi_1X^{2^i}+\varphi_2 X + \varphi_3 + \varphi_4 \left(\sum_{j=0}^{i-1} Y^{2^j}  \right) =0,
		\end{equation}
		where $\varphi_j$ for $j=1,2,3,4$ are defined by (\ref{varphi_permutation}).
	\end{enumerate}
\end{Prop}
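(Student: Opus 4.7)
The plan is to apply Lemma \ref{lem1} (with the role of $q$ there played by $q^2$) to reduce the permutation property of $f$ to a question about a simpler map on the unit circle $\mu_{q+1}$, and then to unwind the resulting condition using Dickson polynomials. Since $f(x) = x^{2^i+1} h(x^{q-1})$ has the shape $x^r h(x^{(q^2-1)/d})$ with $r = 2^i+1$ and $d = q+1$, Lemma \ref{lem1} shows that $f$ permutes $\gf_{q^2}$ iff $\gcd(2^i+1, q-1) = 1$ (condition (1)) and $g(x) = x^{2^i+1} h(x)^{q-1}$ permutes $\mu_{q+1}$; for $g$ to be well-defined on $\mu_{q+1}$ we also need $h(x) \neq 0$ there, which is condition (2). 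Exploiting $x^q = x^{-1}$ and $\epsilon_j \in \gf_q$, we compute $h(x)^q = h(x^{-1}) = x^{-(2^i+1)} h^*(x)$ with $h^*(x) := \epsilon_4 x^{2^i+1} + \epsilon_3 x^{2^i} + \epsilon_2 x + \epsilon_1$, so that $g(x) = h^*(x)/h(x)$, and a short check verifies $g(\mu_{q+1}) \subseteq \mu_{q+1}$, $g(1) = 1$, and $g(x^{-1}) = 1/g(x)$.

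Since $\mu_{q+1}$ is finite, it suffices to analyze when $g(x) = g(y)$ for distinct $x, y \in \mu_{q+1}$. If $y = x^q$ with $x \neq 1$, then $g(y) = 1/g(x)$ forces $g(x)^2 = 1$ and hence $g(x) = 1$, which condition (3) rules out; if exactly one of $x, y$ equals $1$, then the other satisfies $g(\cdot) = 1$, again contradicting (3). In the remaining case $x, y \in \mu_{q+1} \setminus \{1\}$ with $y \neq x, x^q$, the identity $g(x) = g(y)$ is equivalent to $G(x, y) := h^*(x) h(y) + h^*(y) h(x) = 0$, so the injectivity of $g$ (combined with (3)) is equivalent to $G(x, y) \neq 0$ for every such pair.

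The main computation rewrites $G(x, y)$ in terms of the symmetric quantities $s := x+y$, $t := xy$, and then the $\gf_q$-valued parameters $X = (t+1)/s$ and $Y = t/s^2$ appearing in $T$ (membership of $X, Y$ in $\gf_q$ uses $x^q = x^{-1}$ and $y^q = y^{-1}$). Expanding $G(x, y)$ and pairing the monomials $x^a y^b + x^b y^a$ yields
\begin{equation*}
G(x, y) = \varphi_1 t^{2^i} s + \varphi_2 t s^{2^i} + \varphi_3 (x^{2^i+1} + y^{2^i+1}) + (\epsilon_2^2 + \epsilon_3^2)\, t (x^{2^i-1} + y^{2^i-1}) + \varphi_2 s^{2^i} + \varphi_1 s.
\end{equation*}
The Dickson identity $x^{2^i+1} + y^{2^i+1} = s^{2^i+1} + t\, D_{2^i-1}(s, t)$, the algebraic identity $\varphi_3 + \epsilon_2^2 + \epsilon_3^2 = \varphi_4$, and Lemma \ref{Dickson_2i-1}, which gives $t\, D_{2^i-1}(s,t)/s^{2^i+1} = \sum_{j=0}^{i-1}(t/s^2)^{2^j} = \sum_{j=0}^{i-1} Y^{2^j}$, then allow dividing by $s^{2^i+1}$ and substituting $t/s = X + 1/s$; the residual $1/s$- and $1/s^{2^i}$-contributions each appear twice and cancel, leaving
\begin{equation*}
G(x, y)/s^{2^i+1} = \varphi_1 X^{2^i} + \varphi_2 X + \varphi_3 + \varphi_4 \sum_{j=0}^{i-1} Y^{2^j}.
\end{equation*}
Since $(X, Y)$ ranges exactly over $T$ as $(x, y)$ ranges over the pairs of the remaining case, the requirement $G(x, y) \neq 0$ on all such pairs is precisely condition (4), closing the equivalence. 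The main obstacle is the symbolic expansion of $G(x, y)$ and recognizing that Lemma \ref{Dickson_2i-1} collapses the cumbersome contributions $x^{2^i\pm 1} + y^{2^i\pm 1}$ into the tidy sum $\varphi_4 \sum_{j=0}^{i-1} Y^{2^j}$; once this observation is made, the rest is systematic bookkeeping.
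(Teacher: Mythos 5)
Your proposal is correct and follows essentially the same route as the paper: reduce via Lemma \ref{lem1} to the permutation of $g=h^*/h$ on $\mu_{q+1}$, dispose of the cases $y=x^q$ and $y=1$ using $g(x^q)=g(x)^{-1}$ and condition (3), and convert $g(x)=g(y)$ into the equation in $X,Y$ via Dickson polynomials. The only cosmetic difference is that you expand $h^*(x)h(y)+h^*(y)h(x)$ monomial by monomial and use $x^{2^i+1}+y^{2^i+1}=s^{2^i+1}+tD_{2^i-1}(s,t)$ together with $\varphi_3+\epsilon_2^2+\epsilon_3^2=\varphi_4$, whereas the paper groups the same terms as $\varphi_4(x^{2^i}y+xy^{2^i})$ and evaluates $D_{2^i+1}(1,Y)$; the two computations are identical in substance.
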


\begin{proof}
  It is clear that $f(x)=x^{2^i+1}h\left( x^{q-1} \right)$. According to Lemma \ref{lem1}, $f(x)$ permutes $\gf_{q^2}$ if and only if $\gcd\left( 2^i+1, q-1 \right)=1$ and $$g(x)=x^{2^i+1}h(x)^{q-1}=\frac{\epsilon_4x^{2^i+1}+\epsilon_3x^{2^i}+\epsilon_2x+\epsilon_1}{\epsilon_1x^{2^i+1}+\epsilon_2x^{2^i}+\epsilon_3x+\epsilon_4}$$
	permutes $\mu_{q+1}$, which obviously implies that $h(x)=0$ has no solution in $\mu_{q+1}$ and $g(x)=1$ if and only if $x=1$.  In the following, we assume that the conditions (1),(2) and (3) hold. Therefore, $g(x)$ permutes $\mu_{q+1}$ if and only if $g(x)+g(y)=0$ has no solution for $x,y\in\mu_{q+1}\backslash\{1\}$ with $x\neq y$. In fact, if $g(x)+g(y)=0$ for some $y=x^q$, then we have $g(x)=g(y)=g(x^q)=g(x)^q=g(x)^{-1}$ and  thus  $g(x)=1$, which means that $x=1$. 
	Thus we can only consider the conditions such that $g(x)+g(y)=0$ has no solution for $x,y\in\mu_{q+1}\backslash\{1\}$ with $y\neq x,x^q$.
Next, we prove the necessity and sufficiency of the condition (4).

{\bfseries The sufficiency of (4).}    
Suppose $g(x)+g(y)=0$, i.e.,
$$
\frac{\epsilon_4x^{2^i+1}+\epsilon_3x^{2^i}+\epsilon_2x+\epsilon_1}{\epsilon_1x^{2^i+1}+\epsilon_2x^{2^i}+\epsilon_3x+\epsilon_4} = \frac{\epsilon_4y^{2^i+1}+\epsilon_3y^{2^i}+\epsilon_2y+\epsilon_1}{\epsilon_1y^{2^i+1}+\epsilon_2y^{2^i}+\epsilon_3y+\epsilon_4}.
$$
After a routine calculation, we obtain
	\begin{eqnarray*}
		\varphi_1(x+y)(xy+1)^{2^i} + \varphi_2 (x+y)^{2^i}(xy+1)  + \varphi_3(x+y)^{2^i+1}+ \varphi_4 \left(x^{2^i}y+xy^{2^i}\right)=0,
	\end{eqnarray*}
	where  $\varphi_j$ for $j=1,2,3,4$ are as defined in (\ref{varphi_permutation}). By the previous discussion, we now only need to consider the case that  $(x+y)(xy+1)\neq 0$.
    Therefor, the above equation is equivalent to 
  \begin{equation}\label{Eq-prop1}
  \varphi_1\left(\frac{xy+1}{x+y}\right)^{2^i} + \varphi_2 \left(\frac{xy+1}{x+y}\right)+ \varphi_3+ \varphi_4 \left(\frac{x^{2^i}y+xy^{2^i}}{(x+y)^{2^i+1}}\right)=0.
  \end{equation}
  Note that 
  $$
  \frac{x^{2^i}y+xy^{2^i}}{(x+y)^{2^i+1}} =  \frac{x^{2^i+1}+y^{2^i+1}}{(x+y)^{2^i+1}}  + 1 = \left(\frac{x}{x+y}\right)^{2^i+1} + \left(\frac{y}{x+y}\right)^{2^i+1} + 1.
  $$
  It follows from Lemma \ref{Dickson_property} (1) that the coefficient of $\varphi_4$ can be expressed in terms of Dickson polynomial as 
  $$
  \frac{x^{2^i}y+xy^{2^i}}{(x+y)^{2^i+1}} = D_{2^i+1}\left(1, \frac{xy}{(x+y)^2}\right) + 1.
  $$
  In addition, by Lemma \ref{Dickson_property} (2), (4) and Lemma \ref{Dickson_2i-1}, 
  \begin{equation}
  \begin{split}
  	  D_{2^i+1}(x, a) = D_{2^i}(x, a) + aD_{2^i-1}(x, a) 
  	  & = x^{2^i} +  \sum\limits_{j=0}^{i-1}a^{2^j}x^{2^i-2^{j+1}+1}. 
  \end{split}
  \end{equation} Denote $X=\frac{xy+1}{x+y}$ and $Y=\frac{xy}{(x+y)^2}$.  It is straightforward that
$g(x) = g(y)$ can be rewritten as
 	\begin{equation}
 	\label{eq_XY}
 	\varphi_1X^{2^i}+\varphi_2 X + \varphi_3 + \varphi_4 \left(\sum_{j=0}^{i-1} Y^{2^j}  \right) =0.
 	\end{equation}
%
%
%
	Thus, if there exist some $x,y\in\mu_{q+1}$ with $y\neq x,x^q$ such that $g(x)+g(y)=0$ holds, there must exist some $(X,Y)\in T$ such that Eq. (\ref{eq_XY}) holds. Thus if the condition (4) holds, $g(x)$ permutes $\mu_{q+1}$. 

{\bfseries The necessity of (4).}  
On the contrary, if the condition (4) does not hold, which means that there exist some $(X,Y)\in T$ such that Eq. (\ref{eq_XY}) holds, then there must exist some $x,y\in\mu_{q+1}\backslash\{1\}$ with $y\neq x,x^q$ such that $g(x)+g(y)=0$, which implies that $g(x)$ does not permute $\mu_{q+1}$.  

On combining the sufficiency and necessity, we have proved the desired conclusion.
\end{proof}



{\bfseries Proof the permutation part in Theorem \ref{main_theorem}.}

In the following, we will prove the permutation part in Theorem \ref{main_theorem} by verifying the conditions in Proposition \ref{permutation_Vi}.

First of all, if $\alpha=1$, it is easy to obtain that $\beta=1$ from the definition of $\Gamma$ and
 $$f_i(x) = \left\{
\begin{array}{lr}
x^{q\cdot (2^i+1)}, ~\text{when}~ i ~\text{is even}  \\ 
x^{2^i+q}, ~\text{when}~ i ~\text{is odd},
\end{array}
\right. $$ clearly permutes $\gf_{q^2}$. Thus in the following, we assume that $\alpha\neq1$.  It suffices to show the four items of Proposition \ref{permutation_Vi}.

(1) Since $n$ is odd and $\gcd(i,n)=1$, we have $\gcd(2^i+1,2^n-1)=1$ due to the fact $\gcd(2^i+1,2^n-1) \mid \gcd(2^{2i}-1,2^n-1) = 2^{\gcd(2i,n)}-1=1$. 

(2) Next we show that $h(x)=0$ has no solution in $\mu_{q+1}\backslash\{1\}$ ($h(1)=\varphi_4^{1/2}\neq0$ according to the definition). Suppose that there exists some $x_0\in\mu_{q+1}\backslash\{1\}$ satisfying
 \begin{equation}
 \label{h(x_0)=0}
 \epsilon_1x_0^{2^i+1}+\epsilon_2x_0^{2^i}+\epsilon_3x_0+\epsilon_4 =0. 
 \end{equation}
 Raising Eq. (\ref{h(x_0)=0}) to the $q$-th power and re-arranging it according to $x_0^q=x_0^{-1}$, we obtain 
  \begin{equation}
 \label{h(x_0)=0_1}
 \epsilon_4x_0^{2^i+1}+\epsilon_3x_0^{2^i}+\epsilon_2x_0+\epsilon_1 =0. 
 \end{equation}
Summing $\epsilon_4 \times (\ref{h(x_0)=0})$ and $\epsilon_1 \times (\ref{h(x_0)=0_1})$ gives
 \begin{equation}
 \label{h(x_0)=0_2}
 \varphi_1x_0^{2^i}+\varphi_2x_0+\varphi_3=0.
 \end{equation}
Computing $\varphi_3 \times (\ref{h(x_0)=0_2}) + \varphi_1 \times (\ref{h(x_0)=0_2})^q\times x_0^{2^i} $ yields
 \begin{equation}
 \label{h(x_0)=0_3}
 \varphi_1\varphi_2x_0^{2^i-1}+\varphi_2\varphi_3x_0+\varphi_1^2+\varphi_3^2=0.
 \end{equation}
 Furthermore, by computing $(\ref{h(x_0)=0_3})\times x_0 + (\ref{h(x_0)=0_2}) \times \varphi_2$, we obtain 
 \begin{equation}
 \label{h(x_0)=0_4}
 \varphi_2\varphi_3 x_0^2+\left( \varphi_1^2+\varphi_2^2+\varphi_3^2 \right) x_0 + \varphi_2\varphi_3=0.
 \end{equation}
Note that in the above equation $\varphi_2\varphi_3\neq0$. Otherwise, we have $\varphi_1^2+\varphi_2^2=\varphi_3^2$. Recall that $\varphi_1^2+\varphi_2^2=\varphi_3(\varphi_3+\varphi_4)$
from \eqref{Eq-vaphi1+2} and \eqref{Eq-varphi3+4}.
Thus we obtain $\varphi_3\varphi_4=0$, which is in contradiction with $\varphi_4\neq 0$ in definition of $\Gamma$ and $\varphi_3\neq 0$ in Lemma \ref{property of Gamma} (1).
 which is also a contradiction. Thus Eq. (\ref{h(x_0)=0_4}) becomes 
 \begin{equation}
 \label{h(x_0)=0_5}
 x_0^2+\frac{\varphi_1^2+\varphi_2^2+\varphi_3^2}{\varphi_2\varphi_3}  x_0 + 1 =0.
 \end{equation}
Note that
 \begin{eqnarray*}
 \tr_q\left(\frac{\varphi_2\varphi_3}{\varphi_1^2+\varphi_2^2+\varphi_3^2}\right)=\tr_q\left( \frac{\varphi_2\varphi_3}{\varphi_3 \varphi_4} \right)
=\tr_q\left( \frac{\varphi_2}{\varphi_4} \right) =0.
 \end{eqnarray*}
This implies that Eq. (\ref{h(x_0)=0_5}) has a solution $x_0\in\gf_{q}$, which contradicts $\mu_{q+1}\backslash\{1\}$. Therefore,  $h(x)=0$ has no solution in $\mu_{q+1}$.
 
 (3) If there exists some $x_0\in\mu_{q+1}\backslash\{1\}$ such that $g(x_0)=1$, then we have
 \begin{equation}
 \label{g(x_0)=1}
 \left(\epsilon_1+\epsilon_4\right)x_0^{2^i+1}+\left(\epsilon_2+\epsilon_3\right)x_0^{2^i}+\left(\epsilon_2+\epsilon_3\right)x_0+\epsilon_1+\epsilon_4=0.
 \end{equation}
 
 According to Lemma \ref{mu_Fq}, we know that for any $x_0\in\mu_{q+1}\backslash\{1\}$, there exists a unique element $y_0\in\gf_{q}$ such that $x_0=\frac{y_0+\gamma}{y_0+\gamma^2}$, where $\gamma\in\gf_{2^2}\backslash\gf_2$. 
By plugging $x_0=\frac{y_0+\gamma}{y_0+\gamma^2}$ into Eq. (\ref{g(x_0)=1}) and a routine rearrangement, we obtain
 \begin{equation}\label{g(x_0)=2}
 y_0^{2^i}+y_0+\frac{\varepsilon_1+\varepsilon_4}{\epsilon_1+\epsilon_2+\epsilon_3+\epsilon_4}=0,
 \end{equation}
 where $\varepsilon_1, \varepsilon_4$ are defined as in \eqref{ieven} satisfying that $\varepsilon_1+\varepsilon_4=\epsilon_1+\epsilon_4$ for even $i$ and $\varepsilon_1+\varepsilon_4=\epsilon_2+\epsilon_3$ for odd $i$. 
In other words, $\varepsilon_1+\varepsilon_4$ corresponds to $\varphi_3+\varphi_4$ for even $i$ and $\varphi_3$ for odd $i$.
By Lemma \ref{property of Gamma} (2) and (3), we have 
 $$
 \tr_q\left( \frac{\varepsilon_1+\varepsilon_4}{\epsilon_1+\epsilon_2+\epsilon_3+\epsilon_4} \right) = 1.
 $$
This implies \eqref{g(x_0)=2} has no solution in $\mathbb{F}_q$.
Hence $g(x)=1$ if and only if $x=1$.
 
 (4)  Recall that $Y=\frac{xy}{x^2+y^2}$ for some $x,y\in\mu_{q+1}\backslash\{1\}$ with $x\neq y$ and thus
 $$\tr_q\left(Y\right)=\tr_q\left( \frac{y}{x+y} + \left( \frac{y}{x+y} \right)^2 \right) = 1,$$
 since $\frac{y}{x+y}\in\gf_{q^2}\backslash\gf_{q}$.  It is clear that Eq. (\ref{eq_XY}) required in Proposition \ref{permutation_Vi} is equivalent to 
\begin{eqnarray*}
\sum_{j=0}^{i-1} Y^{2^j}  &=&\frac{\varphi_1}{\varphi_4}X^{2^i}+\frac{\varphi_2}{\varphi_4}X+\frac{\varphi_3}{\varphi_4}\\
&=& \left(\frac{\varphi_2}{\varphi_4}X\right)^{2^i}+\frac{\varphi_2}{\varphi_4}X+\frac{\varphi_3}{\varphi_4}.
\end{eqnarray*}
By $\tr_q(Y)=1$ we have
\begin{eqnarray}
\tr_q\left(  \sum_{j=0}^{i-1} Y^{2^j}  \right) = \left\{
\begin{array}{lr}
0, ~\text{when}~ i ~\text{is even}  \\ 
1, ~\text{when}~ i ~\text{is odd},
\end{array}
\right.
\end{eqnarray}
on the other hand,  the expression on the right hand side satisfies $$\tr_q\left( \left(\frac{\varphi_2}{\varphi_4}X\right)^{2^i}+\frac{\varphi_2}{\varphi_4}X+\frac{\varphi_3}{\varphi_4} \right) = \left\{
 \begin{array}{lr}
 1, ~\text{when}~ i ~\text{is even}  \\ 
 0, ~\text{when}~ i ~\text{is odd},
 \end{array}
 \right. $$
according to Lemma \ref{property of Gamma}. It is clear that Eq. (\ref{eq_XY})  does not hold for any $X,Y\in\gf_q$.

Up to now, all the four items in Proposition  \ref{permutation_Vi} are confirmed. Hence the function $V_i(x,y)$ in Theorem \ref{main_theorem} permutes $\gf_{q}^2$.

\section{The boomerang uniformity of $V_i$ in Theorem \ref{main_theorem}}

\label{boomerang}

In this section, we will prove that the function 
$$V_i := (R_i(x,y), R_i(y,x))$$ 
with $R_i(x,y)=(x+\alpha y)^{2^i+1}+\beta y^{2^i+1}$  has boomerang uniformity $4$ when the pair $(\alpha, \beta)$ is taken from the set $\Gamma$ as in given in Theorem \ref{main_theorem}.  Here and hereafter, we assume that $n$ is odd, $q=2^n$ and $(\alpha,\beta)\in\Gamma$.  

First of all, the condition $\beta\neq (\alpha+1)^{2^i+1}$ in Lemma \ref{differential_uniformity_butterfly} corresponds to the condition $\varphi_4\neq 0$ in $\Gamma$. Hence the differential uniformity of with $R_i(x,y)=(x+\alpha y)^{2^i+1}+\beta y^{2^i+1}$ is at most  $4$ for any $(\alpha,\beta)\in\Gamma$. Furthermore, Canteaut et al. \cite{canteaut2018if} showed that if $V_i$ is APN then it operates on 6 bits. Therefore, the differential uniformity of $V_i$ is equal to $4$.  Since $V_i$ in Theorem \ref{main_theorem} permutes $\gf_{q}^2$ and hasdifferential uniformity $4$, we can use Lemma \ref{quadratic_boomerang} to show the boomerang uniformity of $V_i$.
For any $(a,b)\in\gf_{q}^2$, denote $$S_{V_i,(a,b)}(x,y) = V_i(x+a,y+b)+V_i(x,y)+V_i(a,b)$$ and $$\mathrm{Im}_{V_i,(a,b)} = \left\{ S_{V_i,(a,b)}(x,y) ~~|~~ (x,y)\in\gf_{q}^2  \right\}.$$ 
According to Lemma \ref{quadratic_boomerang}, we need to determine  $(a_1,b_1), (a_2,b_2)\in\gf_{q}^2\backslash\{(0,0)\}$ satisfying $S_{V_i,(a_1,b_1)}(a_2,b_2)=(0,0)$, and then to prove that for any such pairs  the equation $\mathrm{Im}_{V_i,(a_1,b_1)}=\mathrm{Im}_{V_i,(a_2,b_2)}$ holds.

\subsection{The solutions of $S_{V_i,(a_1,b_1)}(a_2,b_2)=(0,0)$}

The solution of the equation $S_{V_i,(a_1,b_1)}(a_2,b_2)=(0,0)$ is studied in the following proposition.

\begin{Prop}
	\label{relations}
	Let $V_i$ be defined as in Theorem \ref{main_theorem} with $(\alpha,\beta) \in \Gamma $ and $\varphi_j$ for $j=1,2,3,4$ defined as in (\ref{varphi}).
	Then the elements $(a_1,b_1), (a_2,b_2) \in\gf_{q}^2\backslash\{ (0,0) \}$such that 
	$$V_i(a_1+a_2, b_1+b_2)+V_i(a_1,b_1)+V_i(a_2,b_2)=(0,0)$$ are given as follows:
	\begin{enumerate}[(1)]
		\item $a_2=a_1$ and $b_2=b_1$;
		\item $a_2=\left( \frac{\varphi_2+\varphi_4}{\varphi_4}\alpha+1\right) a_1+\frac{\varphi_2+\varphi_4}{\varphi_4}b_1$ and  $b_2= \frac{\varphi_2+\varphi_4}{\varphi_4} a_1+  \frac{\varphi_2+\varphi_4}{\varphi_4}  \alpha b_1$;
		\item $a_2=\frac{\varphi_2+\varphi_4}{\varphi_4}\alpha a_1+\frac{\varphi_2+\varphi_4}{\varphi_4}b_1$ and  $b_2= \frac{\varphi_2+\varphi_4}{\varphi_4} a_1+\left( \frac{\varphi_2+\varphi_4}{\varphi_4}\alpha+1\right)b_1$.
	\end{enumerate}
\end{Prop}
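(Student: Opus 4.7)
The plan is to exploit the fact that $V_i$ is a quadratic map, so that $(a_2,b_2)\mapsto S_{V_i,(a_1,b_1)}(a_2,b_2)$ is $\mathbb{F}_2$-linear in $(a_2,b_2)$ and its kernel $K$ is an $\mathbb{F}_2$-subspace of $\gf_q^2$. By Lemma \ref{differential_uniformity_butterfly}, $V_i$ has differential uniformity at most $4$, so $|K|\le 4$. It therefore suffices to exhibit three nonzero elements of $K$; together with $(0,0)$ they exhaust $K$.

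The polar form of $R_i$ is
\begin{equation*}
B_R((a,b);(x,y)):=R_i(a+x,b+y)+R_i(a,b)+R_i(x,y)=(a+\alpha b)^{2^i}(x+\alpha y)+(a+\alpha b)(x+\alpha y)^{2^i}+\beta\bigl(b^{2^i}y+b y^{2^i}\bigr),
\end{equation*}
so that $S_{V_i,(a_1,b_1)}(a_2,b_2)=(P,Q)$ with $P=B_R((a_1,b_1);(a_2,b_2))$ and $Q=B_R((b_1,a_1);(b_2,a_2))$. Item (1) is immediate since $S_{V_i,(a_1,b_1)}(a_1,b_1)=V_i(0,0)=(0,0)$, and item (3) equals item (2) plus item (1), so by linearity it suffices to verify item (2).

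For item (2), write $\lambda:=(\varphi_2+\varphi_4)/\varphi_4$, $\mu:=\lambda^{2^i}$, and $c:=1+\alpha^2$. By Lemma \ref{property of Gamma}(1) the defining condition of $\Gamma$ is equivalent to $\mu=(\varphi_1+\varphi_4)/\varphi_4$. Substituting $(a_2,b_2)=((\lambda\alpha+1)a_1+\lambda b_1,\,\lambda a_1+\lambda\alpha b_1)$ yields the clean forms $b_2=\lambda(a_1+\alpha b_1)$ and $a_2=a_1+\lambda c b_1$. Expanding $P$ and collecting the three surviving monomials $a_1^{2^i}b_1$, $a_1 b_1^{2^i}$, and $b_1^{2^i+1}$ reduces the vanishing of $P$ to the two scalar identities
\begin{equation*}
\lambda c+\beta\mu=\alpha\qquad\text{and}\qquad \mu c^{2^i}+\beta\lambda=\alpha^{2^i},
\end{equation*}
since the coefficient of $b_1^{2^i+1}$ is $\alpha^{2^i}\cdot\alpha+\alpha\cdot\alpha^{2^i}=0$ once the first two hold. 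Both identities are verified by clearing the denominator $\varphi_4$ and applying the expansions of $\varphi_1+\varphi_4$, $\varphi_2+\varphi_4$, $\varphi_4$ in terms of $\alpha,\beta$ from Lemma \ref{property of Gamma}. For the second equation $Q=0$, the same substitution gives $v_2=(\lambda c+\alpha)a_1=\beta\mu a_1$ (using the first identity) and $a_2=a_1+\lambda v_1$ with $v_1=b_1+\alpha a_1$; expanding $Q$ then leaves only two nontrivial coefficients, $\beta\mu+\beta\lambda^{2^i}$ and $(\beta\mu)^{2^i}+\beta\lambda$, the first vanishing via $\lambda^{2^i}=\mu$ and the second by raising the first identity to the $2^i$-th power and invoking the second.

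The main obstacle will be the algebraic bookkeeping in the expansions of $P$ and $Q$: the cancellations rely on marrying the $\Gamma$-identity $\varphi_2^{2^i}=\varphi_1\varphi_4^{2^i-1}$ with the Frobenius-power structure, and each monomial coefficient must be tracked carefully. The derivation is parity-independent since it operates on the bivariate form of $R_i$. Completeness of the list follows from the upper bound $|K|\le 4$: when items (1)--(3) together with $(0,0)$ are distinct, they already fill this bound and therefore equal $K$; if some of them coincide for particular $(a_1,b_1)$, then $K$ has size $2$ and the statement still holds because the degenerate items simply repeat existing solutions.
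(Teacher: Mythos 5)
Your strategy is correct in substance but follows a genuinely different route from the paper. The paper starts from the bivariate system \eqref{eq_a1a2b1b2_1}--\eqref{eq_a1a2b1b2_2}, eliminates $a_2$ and then $a_2^{2^i}$ to reach the single equation \eqref{b2} in $b_2$, and \emph{derives} the nontrivial solution by the substitution $\hat{b}_2=1/(\tilde{b}_2+b_1^{2^i-1})$ together with the identities of Lemma \ref{property of Gamma}; you instead \emph{verify} the announced solution directly in the polar form of the bivariate $R_i$, reducing everything to the two scalar identities $\lambda c+\beta\mu=\alpha$ and $\mu c^{2^i}+\beta\lambda=\alpha^{2^i}$ with $\lambda=(\varphi_2+\varphi_4)/\varphi_4$, $\mu=\lambda^{2^i}=(\varphi_1+\varphi_4)/\varphi_4$, $c=(1+\alpha)^2$. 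These identities do hold: by \eqref{varphi_1+2+3+4} one has $\varphi_1+\varphi_4=\alpha^{2^i}c+\alpha\beta$, $\varphi_2+\varphi_4=\alpha c^{2^i}+\alpha^{2^i}\beta$ and $\varphi_4=c^{2^i+1}+\beta^2$, and clearing $\varphi_4$ makes both identities immediate. Your computation is considerably shorter, is manifestly parity-independent, and isolates exactly where the $\Gamma$-condition enters (through $\mu=\lambda^{2^i}$); the paper's elimination has the advantage of explaining where the solution comes from and of producing the intermediate quantities $\lambda_j,\eta_j$ that are reused in the subsequent $\mathrm{Im}$-comparison. One slip to fix: the ``clean form'' should be $a_2+\alpha b_2=a_1+\lambda c\,b_1$, not $a_2=a_1+\lambda c\,b_1$ (the latter is false in general); since $P$ depends only on $a_2+\alpha b_2$ and $b_2$, your subsequent coefficient bookkeeping is consistent with the corrected statement.

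The one genuine soft spot is the final completeness step. From $\Delta_{V_i}\le 4$ you correctly get $|K|\le 4$, and if the three listed elements together with $(0,0)$ are distinct you are done. But the fallback claim ``if some of them coincide then $K$ has size $2$ and the statement still holds'' is not a valid inference: if item (2) degenerated to $(0,0)$ or to $(a_1,b_1)$, your list would only certify two elements of $K$, while $K$ could still have order $4$ with two solutions missing from the list, and the proposition would then be false as stated. You must rule out the degeneracy rather than absorb it. This is a short check: item (2) coincides with $(0,0)$ or with $(a_1,b_1)$ only if $\lambda(1+\alpha)^2=\alpha$, i.e. $\varphi_2/\varphi_4=1+1/(\alpha+1)+1/(\alpha+1)^2$, which has absolute trace $1$ since $n$ is odd, contradicting $\tr_q(\varphi_2/\varphi_4)=0$ from Lemma \ref{property of Gamma}(4). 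With that patch your argument is complete.
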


\begin{proof}
Note that the equation $$S_{V_i,(a_1,b_1)}(a_2,b_2)=V_i(a_1+a_2, b_1+b_2)+V_i(a_1,b_1)+V_i(a_2,b_2)=(0,0)$$  can be rewritten as
	\begin{subequations} 
		\renewcommand\theequation{\theparentequation.\arabic{equation}}     
		\label{B_V=0}
		\small
		\begin{empheq}[left={\empheqlbrace\,}]{align}
		& (a_1+\alpha b_1 ) a_2^{2^i} + ( a_1^{2^i}+\alpha^{2^i}b_1^{2^i} ) a_2 + ( \alpha^{2^i}a_1+(\alpha^{2^i+1}+\beta)b_1 ) b_2^{2^i} + ( \alpha a_1^{2^i}+ (\alpha^{2^i+1}+\beta) b_1^{2^i}  ) b_2 =0 \label{eq_a1a2b1b2_1} \\ 
		&  ( (\alpha^{2^i+1}+\beta)a_1+\alpha^{2^i}b_1 )a_2^{2^i}+ ( (\alpha^{2^i+1}+\beta) a_1^{2^i}+ \alpha b_1^{2^i}  )a_2+  (\alpha a_1+ b_1 )b_2^{2^i} +   (\alpha^{2^i} a_1^{2^i} + b_1^{2^i} ) b_2=0. \label{eq_a1a2b1b2_2}
		\end{empheq}
	\end{subequations}

Let $\varphi_j$ for $j=1,2,3,4$ be defined by (\ref{varphi}).  Eliminating the terms $a_2^{2^i}$ in the above equations by computing $(\ref{eq_a1a2b1b2_1})\times \left( \left(\alpha^{2^i+1}+\beta\right)a_1+\alpha^{2^i}b_1  \right) + (\ref{eq_a1a2b1b2_2}) \times \left(a_1+\alpha b_1 \right)$, we obtain 
\begin{equation}
\label{a_2}
\lambda_1 a_2+\lambda_2 b_2^{2^i}+\lambda_3 b_2=0,
\end{equation}
where the coefficients are given by
\begin{equation*}	
\left\{
\begin{array}{lr}
\lambda_1 = \left( \varphi_1+\varphi_4 \right) a_1^{2^i}b_1+\left( \varphi_2+\varphi_4 \right) a_1b_1^{2^i} + (\varphi_{3}+\varphi_4)b_1^{2^i+1}  \\ 
\lambda_2 = \left(  \varphi_2+\varphi_4 \right) a_1^2 + \varphi_4a_1b_1 + \left( \varphi_2+\varphi_4 \right) b_1^2 \\
\lambda_3 = \left( \varphi_1+\varphi_4\right) a_1^{2^i+1} + \varphi_{3} a_1b_1^{2^i} +  \left( \varphi_2+\varphi_4 \right) b_1^{2^i+1},
\end{array}
\right.
\end{equation*}  where $\varphi_1, \varphi_2, \varphi_4$ are as defined in \eqref{varphi} and $\varphi_{3}$ is indeed $\varphi_{3,e}=(\alpha^{2^i+1}+\beta + 1)^2$ for even $i$. Here and hereafter, we use 
$\varphi_3$ to denote $\varphi_{3,e}$ for simplicity of notation.


\smallskip

When $b_1=0$, we have $a_1\neq0$, $\lambda_1=0, \lambda_2 = \left(  \varphi_2+\varphi_4 \right) a_1^2 $ and $ \lambda_3 = \left( \varphi_1+\varphi_4\right) a_1^{2^i+1}$. Moreover, Eq. (\ref{a_2}) becomes $\lambda_2 b_2^{2^i} = \lambda_3 b_2$. This together with Lemma \ref{property of Gamma} (1) implies 
$$b_2=0 \text{ or } b_2=\left( \frac{\varphi_1+\varphi_4}{\varphi_2+\varphi_4} \right)^{\frac{1}{2^i-1}}a_1 = \frac{\varphi_2+\varphi_4}{\varphi_4} a_1.   $$
Note that in the case of $b_1=0$, Eq. (\ref{eq_a1a2b1b2_1}) becomes $$ \left(\frac{a_2}{a_1}\right)^{2^i}+\frac{a_2}{a_1} = \left( \frac{\alpha b_2}{a_1} \right)^{2^i} + \frac{\alpha b_2}{a_1}. $$
Therefore, if $b_2=0$, then $a_2=a_1$; if $b_2=\frac{\varphi_2+\varphi_4}{\varphi_4} a_1$, then $a_2=\frac{\varphi_2+\varphi_4}{\varphi_4}\alpha a_1$ or $a_2=\frac{\varphi_2+\varphi_4}{\varphi_4}\alpha a_1+a_1$.

When $b_1\neq0$. Eliminating the terms $b_2^{2^i}$ by computing $(\ref{eq_a1a2b1b2_1})\times \left( (\alpha^{2^i+1}+\beta) a_1^{2^i}+ \alpha b_1^{2^i} \right) + (\ref{eq_a1a2b1b2_2}) \times \left( a_1^{2^i}+\alpha^{2^i}b_1^{2^i} \right)$, we obtain
\begin{equation}
\label{a_2^4}
\eta_1 a_2^{2^i} + \eta_2 b_2^{2^i}+\eta_3 b_2=0,
\end{equation}
where 
\begin{equation*}	
\left\{
\begin{array}{lr}
\eta_1 = \lambda_1  \\ 
\eta_2 = \left( \varphi_2+\varphi_4 \right) a_1^{2^i+1} + \varphi_3a_1^{2^i}b_1 + \left( \varphi_1+\varphi_4 \right) b_1^{2^i+1}\\
\eta_3 = \left(  \varphi_1+\varphi_4 \right) a_1^{2^{i+1}} + \varphi_4 a_1^{2^i}b_1^{2^i} +  \left(\varphi_1+\varphi_4 \right) b_1^{2^{i+1}}. 
\end{array}
\right.
\end{equation*} 


Furthermore, computing $(\ref{a_2})^{2^i}+\lambda_1^{2^i-1}\times(\ref{a_2^4})$, we eliminate the terms $a_2^{2^i}$ and obtain
\begin{equation}
\label{b2}
\lambda_2^{2^i}b_2^{2^{2i}-1}+\left( \lambda_1^{2^i-1}\eta_2 + \lambda_3^{2^i} \right) b_2^{2^i-1} + \lambda_1^{2^i-1} \eta_3 =0.
\end{equation}
Here we note that $\lambda_2\neq0$. Otherwise one has  $\left(  \varphi_2+\varphi_4 \right) a_1^2 + \varphi_4a_1b_1 + \left( \varphi_2+\varphi_4 \right) b_1^2=0,$
i.e.,
$$\left(\frac{\varphi_2+\varphi_4}{\varphi_4}\cdot\frac{a_1}{b_1}\right)^2+\frac{\varphi_2+\varphi_4}{\varphi_4}\cdot\frac{a_1}{b_1}+\left(\frac{\varphi_2+\varphi_4}{\varphi_4} \right)^2=0,$$
which is in contradiction with the fact $\tr_q\left(\frac{\varphi_2}{\varphi_4}\right)=0$ by Lemma \ref{property of Gamma} (4). 

In addition, since the differential uniformity of $V_i$ is  $4$, Eq. \eqref{b2} has three nonzero solutions $b_2=b_1, \bar{b} $ and $\bar{b}+b_1$ and we only need to obtain the expression of $\bar{b}$. Clearly, $\tilde{b}_2=b_1^{2^i-1}$ is a solution of 
\begin{equation}
\label{tilde_b2}
\lambda_2^{2^i}\tilde{b}_2^{2^i+1}+\left( \lambda_1^{2^i-1}\eta_2 + \lambda_3^{2^i} \right) \tilde{b}_2 + \lambda_1^{2^i-1} \eta_3 =0.
\end{equation}
Hence, Eq. (\ref{tilde_b2}) can be written as
$$\lambda_2^{2^i}\left(\tilde{b}_2 + b_1^{2^i-1}\right)\left( \tilde{b}_2^{2^i}+ b_1^{2^i-1} \tilde{b}_2^{2^i-1} + b_1^{2\cdot\left(2^i-1\right)} \tilde{b}_2^{2^i-2}+ \cdots + b_1^{\left(2^i-1\right)\cdot\left(2^i-1\right)} \tilde{b}_2 +c \right)=0,$$
where $c=\frac{\lambda_1^{2^i-1}\eta_3}{\lambda_2^{2^i}b_1^{2^i-1}}$. 
Now we consider the equation
\begin{equation}
\label{tilde_b2_2}
\tilde{b}_2^{2^i}+ b_1^{2^i-1} \tilde{b}_2^{2^i-1} + b_1^{2\cdot\left(2^i-1\right)} \tilde{b}_2^{2^i-2}+ \cdots + b_1^{\left(2^i-1\right)\cdot\left(2^i-1\right)} \tilde{b}_2 +c =0.
\end{equation}
Let $\hat{b}_2=\frac{1}{\tilde{b}_2+b_1^{2^i-1}}$. Then Eq. (\ref{tilde_b2_2}) becomes
$$\hat{b}_2^{2^i}+\frac{b_1^{2^i-1}}{c}\hat{b}_2+\frac{1}{c}=0,$$
i.e.,
\begin{equation}
\label{tilde_b2_3}
\left( \frac{c^{\frac{1}{2^i-1}}}{b_1} \hat{b}_2 \right)^{2^i} +  \frac{c^{\frac{1}{2^i-1}}}{b_1} \hat{b}_2 + \frac{c^{\frac{1}{2^i-1}}}{b_1^{2^i}} =0. 
\end{equation}
In addition, we have 
\begin{eqnarray*}
	c^{\frac{1}{2^i-1}} &=& \left(\frac{\lambda_1^{2^i-1}\eta_3}{\lambda_2^{2^i}b_1^{2^i-1}}  \right)^{\frac{1}{2^i-1}} \\
	&=& \frac{\lambda_1}{b_1}\left( \frac{ \left( \varphi_1+\varphi_4 \right) \left(   a_1^{2^{i+1}} + \frac{\varphi_4}{\varphi_1+\varphi_4} a_1^{2^i}b_1^{2^i} + b_1^{2^{i+1}}    \right)   }{\left( \varphi_2+\varphi_4 \right)^{2^i} \left(   a_1^{2^{i+1}} + \frac{\varphi_4^{2^i}}{\left( \varphi_2+\varphi_4\right)^{2^i}} a_1^{2^i}b_1^{2^i} + b_1^{2^{i+1}}    \right) }  \right)^{\frac{1}{{2^i}-1}} \\
	&=& \frac{\lambda_1}{b_1}\left( \frac{ \varphi_1+\varphi_4}{\left( \varphi_2+\varphi_4\right)^{2^i}} \right)^{\frac{1}{{2^i}-1}} \\
	&=& \frac{\lambda_1}{b_1\varphi_4},
\end{eqnarray*}
where the last two equalities follow from Lemma \ref{property of Gamma} (1).
Moreover, 
\begin{eqnarray*}
	\frac{c^{\frac{1}{{2^i}-1}}}{b_1^{2^i}} &=& \frac{\lambda_1}{\varphi_4b_1^{2^i+1}} \\
	&=& \frac{\left( \varphi_1+\varphi_4 \right) a_1^{2^i}+\left( \varphi_2+\varphi_4 \right) a_1b_1^{2^i-1} +\left(  \varphi_3+\varphi_4   \right) b_1^{2^i}}{\varphi_4b_1^{2^i}} \\
	&=& \left( \frac{\left( \varphi_2+\varphi_4\right)a_1}{\varphi_4b_1} \right)^{2^i} + \frac{\left( \varphi_2+\varphi_4\right)a_1}{\varphi_4b_1} + \frac{\varphi_3+\varphi_4}{\varphi_4} \\
	&=& \left( \frac{\left( \varphi_2+\varphi_4\right)a_1}{\varphi_4b_1} + u \right)^{2^i} + \frac{\left( \varphi_2+\varphi_4\right)a_1}{\varphi_4b_1} +u, 
\end{eqnarray*}
where $u=\frac{\varphi_2+\varphi_4}{\varphi_4}\alpha$ due to the second item of Lemma \ref{property of Gamma}. Hence, from Eq. (\ref{tilde_b2_3}), we have 
$$\frac{c^{\frac{1}{{2^i}-1}}}{b_1} \hat{b}_2 \in \left\{\frac{\left( \varphi_2+\varphi_4\right)a_1}{\varphi_4b_1} + u, \frac{\left( \varphi_2+\varphi_4\right)a_1}{\varphi_4b_1} + u+1\right\},  $$
which means that there are exactly two solutions in $\gf_{q}$ for Eq. (\ref{tilde_b2_2}). W.L.O.G., we only consider the first expression here. Namely, we get 
\begin{eqnarray*}
	\hat{b}_2 &=& \frac{b_1}{c^{\frac{1}{{2^i}-1}}}\left( \frac{\left( \varphi_2+\varphi_4\right)a_1}{\varphi_4b_1} + u\right) \\
	&=& \frac{\left( \varphi_2+\varphi_4\right)a_1b_1+\varphi_4ub_1^2}{\lambda_1}.
\end{eqnarray*}
Thus,
$$\tilde{b}_2=\frac{1}{\hat{b}_2}+b_1^{2^i-1} = \frac{\lambda_1}{\left( \varphi_2+\varphi_4\right)a_1b_1+\varphi_4ub_1^2}+b_1^{2^i-1}$$
is one solution of Eq. (\ref{tilde_b2_2}). Furthermore, one solution of Eq. (\ref{b2}) is 
\begin{eqnarray*}
	b_2 &=& \left(\tilde{b}_2\right)^{\frac{1}{{2^i-1}}} \\
	&=& \left( \frac{\left( \varphi_1+\varphi_4 \right)a_1^{2^i}+\varphi_4u^{2^i}b_1^{2^i}}{\left( \varphi_2+\varphi_4 \right)a_1+\varphi_4 u b_1} \right)^{\frac{1}{{2^i}-1}}\\
	&=& \left(\frac{\varphi_1+\varphi_4}{\varphi_2+\varphi_4}\right)^{\frac{1}{{2^i}-1}}\cdot \left(\frac{a_1^{2^i}+\frac{\varphi_4}{\varphi_1+\varphi_4}u^{2^i}b_1^{2^i}}{a_1+\frac{\varphi_4}{\varphi_2+\varphi_4}ub_1}\right)^{\frac{1}{{2^i}-1}} \\
	&=& \frac{\varphi_2+\varphi_4}{\varphi_4}\left(a_1+\frac{\varphi_4}{\varphi_2+\varphi_4}ub_1\right) (\text{by the first item of Lemma \ref{property of Gamma}})\\
	&=& \frac{\varphi_2+\varphi_4}{\varphi_4} a_1+  \frac{\varphi_2+\varphi_4}{\varphi_4}  \alpha b_1 (\text{recall  that } u=\frac{\varphi_2+\varphi_4}{\varphi_4}\alpha).
\end{eqnarray*}
It follows directly from Eq. (\ref{a_2}) that
\begin{eqnarray*}
	a_2 &=& \frac{\lambda_2}{\lambda_1} b_2^{2^i}+\frac{\lambda_3}{\lambda_1}b_2\\
	&=&\left( \frac{\varphi_2+\varphi_4}{\varphi_4}\alpha+1\right) a_1+\frac{\varphi_2+\varphi_4}{\varphi_4}b_1. 
\end{eqnarray*}

\end{proof}

\subsection{The proof of $\mathrm{Im}_{V_i,(a_1,b_1)}=\mathrm{Im}_{V_i,(a_2,b_2)}$}

In this subsection, we prove that for any  $(a_1,b_1), (a_2,b_2)\in\gf_{q}^2\backslash\{(0,0)\}$ satisfying $S_{V_i,(a_1,b_1)}(a_2,b_2)=(0,0)$, $\mathrm{Im}_{V_i,(a_1,b_1)}=\mathrm{Im}_{V_i,(a_2,b_2)}$. 

According to Eq. (\ref{B_V=0}), we know that for any $(a_1,b_1)\in\gf_{q}^2$, $S_{V_i,(a_1,b_1)}(x,y)$ can be represented as 
$$ S_{V_i,(a_1,b_1)}(x,y) = A_1 \begin{bmatrix}
x^{2^i} \\
x 
\end{bmatrix}+B_1\begin{bmatrix}
y^{2^i} \\
y 
\end{bmatrix}, $$
where 
$$A_1 = \begin{bmatrix}
a_1+\alpha b_1, & a_1^{2^i}+\alpha^{2^i}b_1^{2^i} \\
(\alpha^{2^i+1}+\beta)a_1+\alpha^{2^i}b_1, & (\alpha^{2^i+1}+\beta)a_1^{2^i}+\alpha b_1^{2^i} 
\end{bmatrix} \triangleq \begin{bmatrix}
a_{11}, & a_{12} \\
a_{13}, & a_{14} 
\end{bmatrix}$$ 
and
$$B_1 = \begin{bmatrix}
\alpha^{2^i}a_1+(\alpha^{2^i+1}+\beta)b_1, & \alpha a_1^{2^i}+(\alpha^{2^i+1}+\beta)b_1^{2^i} \\
\alpha a_1+ b_1, & \alpha^{2^i}a_1^{2^i}+b_1^{2^i}
\end{bmatrix} \triangleq \begin{bmatrix}
b_{11}, & b_{12} \\
b_{13}, & b_{14} 
\end{bmatrix}.$$

For the three relations between  $(a_1,b_1), (a_2,b_2)\in\gf_{q}^2\backslash\{ (0,0) \}$ presented in Proposition \ref{relations} such that $S_{V_i,(a_1,b_1)(a_2,b_2)}=(0,0)$, it is clear that if $a_2=a_1$ and $b_2=b_1$, we have $\mathrm{Im}_{V_i,(a_1,b_1)}=\mathrm{Im}_{V_i,(a_2,b_2)}$. In addition, if we have proved that $\mathrm{Im}_{V_i,(a_1,b_1)}=\mathrm{Im}_{V_i,(a_2,b_2)}$ holds for the second relation in Proposition \ref{relations}, then so does it for the third relation since the sum of two same subspace equals to the subspace. Therefore, it suffices to show that $\mathrm{Im}_{V_i,(a_1,b_1)}=\mathrm{Im}_{V_i,(a_2,b_2)}$ holds for the second relation in Proposition \ref{relations}. 
Below we will again use $\varphi_3$ to denote $\varphi_{3}$ for $\varphi_{3,e}$ for simplicity.

\medskip

Let $u=\frac{\varphi_2+\varphi_4}{\varphi_4}\alpha$. Then $u^{2^i}=u+\frac{\varphi_3+\varphi_4}{\varphi_4}$. Moreover, $a_2= (u+1) a_1+ \frac{\varphi_2+\varphi_4}{\varphi_4}  b_1$ and  $b_2= \frac{\varphi_2+\varphi_4}{\varphi_4} a_1+ub_1$. Furthermore, we get 
\begin{eqnarray*}
a_2^{2^i} &=& \left(u^{2^i}+1\right)a_1^{2^i}+\left(\frac{\varphi_2+\varphi_4}{\varphi_4}\right)^{2^i}b_1^{2^i} \\
&=& \left( u + \frac{\varphi_3}{\varphi_4} \right) a_1^{2^i} + \frac{\varphi_1+\varphi_4}{\varphi_4} b_1^{2^i}
\end{eqnarray*}
and 
\begin{eqnarray*}
b_2^{2^i} &=& \left(\frac{\varphi_2+\varphi_4}{\varphi_4}\right)^{2^i}a_1^{2^i} + u^{2^i} b_1^{2^i} \\
&=&\frac{\varphi_1+\varphi_4}{\varphi_4} a_1^{2^i} +  \left( u + \frac{\varphi_3+\varphi_4}{\varphi_4} \right)b_1^{2^i}.
\end{eqnarray*}
Therefore, in $S_{V_i,(a_2,b_2)}(x,y)$,  we have 
$$A_2 = \begin{bmatrix}
a_2+\alpha b_2, & a_2^{2^i}+\alpha^{2^i}b_2^{2^i} \\
(\alpha^{2^i+1}+\beta)a_2+\alpha^{2^i}b_2, & (\alpha^{2^i+1}+\beta)a_2^{2^i}+\alpha b_2^{2^i} 
\end{bmatrix} \triangleq \begin{bmatrix}
a_{21}, & a_{22} \\
a_{23}, & a_{24} 
\end{bmatrix},$$
and 
$$B_2 = \begin{bmatrix}
\alpha^{2^i}a_2+(\alpha^{2^i+1}+\beta)b_2, & \alpha a_2^{2^i}+(\alpha^{2^i+1}+\beta)b_2^{2^i} \\
\alpha a_2+ b_2, & \alpha^{2^i}a_2^{2^i}+b_2^{2^i}
\end{bmatrix} \triangleq \begin{bmatrix}
b_{21}, & b_{22} \\
b_{23}, & b_{24} 
\end{bmatrix},$$
where the explicit expressions of entries in $A_2$ and $B_2$ in terms of $a_1, b_1$ are given as follows:
$$
\begin{array}{rcl}
		a_{21} &=& a_2+\alpha b_2 \\
&=& (u+1) a_1+ \frac{\varphi_2+\varphi_4}{\varphi_4}  b_1 +  \frac{\varphi_2+\varphi_4}{\varphi_4} \alpha a_1+u \alpha b_1 \\
&=& \left(u+1+\frac{\varphi_2+\varphi_4}{\varphi_4} \alpha\right) a_1 + \left( \frac{\varphi_2+\varphi_4}{\varphi_4} + u \alpha \right)b_1 \\
&=&a_1 + \left(\alpha^2+1\right) \frac{\varphi_2+\varphi_4}{\varphi_4} b_1 (\text{recall that}~~u=\frac{\varphi_2+\varphi_4}{\varphi_4}\alpha),
\\  \\
	a_{22} &=& a_2^{2^i}+\alpha^{2^i}b_2^{2^i} \\
&=& a_1^{2^i} + \left(\alpha^{2^{i+1}}+1\right) \frac{\varphi_1+\varphi_4}{\varphi_4} b_1^{2^i} (\text{due to the first item of Lemma \ref{property of Gamma}}),
\\ \\
a_{23}	& =& (\alpha^{2^i+1}+\beta)a_2+\alpha^{2^i}b_2 \\
&=&\left( \left( \alpha^{2^{i}+1} +\beta\right) (u+1) + \alpha^{2^i}  \frac{\varphi_2+\varphi_4}{\varphi_4} \right) a_1 + \left( \left(\alpha^{2^{i}+1}+\beta\right)\frac{\varphi_2+\varphi_4}{\varphi_4} + \alpha^{2^i}u  \right)b_1\\
&=&\left(\frac{\left( \varphi_2+\varphi_4\right)\left(\varphi_1+\varphi_4\right)}{\varphi_4} + \alpha^{2^i+1}+\beta \right) a_1 + \frac{\varphi_2+\varphi_4}{\varphi_4} \beta b_1,
\\ \\
a_{24}	&=& (\alpha^{2^i+1}+\beta)a_2^{2^i}+\alpha b_2^{2^i} \\
&=&\left(  \left(\alpha^{2^i+1}+\beta\right) \left(u + \frac{\varphi_3}{\varphi_4}\right) + \alpha \frac{\varphi_1+\varphi_4}{\varphi_4} \right)a_1^{2^i} \\
& & + \left( \left(\alpha^{2^i+1}+\beta\right)\frac{\varphi_1+\varphi_4}{\varphi_4} + \alpha\left(u+\frac{\varphi_3+\varphi_4}{\varphi_4}\right)\right) b_1^{2^i} \\
&=&\left( \frac{\left( \varphi_2+\varphi_4\right)\left(\varphi_1+\varphi_4\right)}{\varphi_4} + \alpha^{2^i+1}+\beta \right) a_1^{2^i} + \frac{\varphi_1+\varphi_4}{\varphi_4} \beta b_1^{2^i} (\text{due to (\ref{varphi_1+2+3+4}) and  (\ref{lemma_eq_5})}),
\\ \\
 b_{21} &=& \alpha^{2^i}a_2+(\alpha^{2^i+1}+\beta)b_2 = \left( \alpha^{2^i}+ \frac{\varphi_2+\varphi_4}{\varphi_4} \beta  \right)a_1 + \frac{\left( \varphi_2+\varphi_4\right)\left(\varphi_1+\varphi_4\right)}{\varphi_4}  b_1, \\
b_{22} &=&  \alpha a_2^{2^i}+(\alpha^{2^i+1}+\beta)b_2^{2^i} =\left( \alpha+\frac{\varphi_1+\varphi_4}{\varphi_4}\beta\right) a_1^{2^i} + \frac{\left( \varphi_2+\varphi_4\right)\left(\varphi_1+\varphi_4\right)}{\varphi_4} b_1^{2^i}, \\
b_{23} &=& \alpha a_2+ b_2 = \left(\frac{\varphi_2+\varphi_4}{\varphi_4}(\alpha^2+1)+ \alpha \right) a_1, \\
b_{24} &=& \alpha^{2^i}a_2^{2^i}+b_2^{2^i} = \left(\frac{\varphi_1+\varphi_4}{\varphi_4}(\alpha^{2^{i+1}}+1)+ \alpha^{2^i} \right) a_1^{2^i}.  
\end{array}
$$

Note that the determinants of $A_1$ and $B_1$ are 
\begin{eqnarray*}
	\mathrm{Det}(A_1) &=& a_{11}a_{14}+a_{12}a_{13} \\
	&=& \left( \varphi_1+\varphi_4\right)a_1^{2^i}b_1 + \left(\varphi_2+\varphi_4\right) a_1b_1^{2^i}+\left(\varphi_3+\varphi_4\right)b_1^{2^i+1},
\end{eqnarray*}
and 
\begin{eqnarray*}
	\mathrm{Det}(B_1) &=& b_{11}b_{14}+b_{12}b_{13} \\
	&=&\left(\varphi_3+\varphi_4\right)a_1^{2^i+1} + \left( \varphi_2+\varphi_4\right)a_1^{2^i}b_1 + \left(\varphi_1+\varphi_4\right) a_1b_1^{2^i}.
\end{eqnarray*}
Now we consider the necessary and sufficient conditions such that $	\mathrm{Det}(A_1)=0$. Clearly, from  $	\mathrm{Det}(A_1)=0$, we have $b_1=0$ or 
$$\left( \varphi_1+\varphi_4\right)\left(\frac{a_1}{b_1}\right)^{2^i} + \left(\varphi_2+\varphi_4\right) \frac{a_1}{b_1}+\varphi_3+\varphi_4=0,$$
namely,
$$\left( \frac{\varphi_2+\varphi_4}{\varphi_4}\cdot \frac{a_1}{b_1} \right)^{2^i}+\frac{\varphi_2+\varphi_4}{\varphi_4}\cdot \frac{a_1}{b_1}=\frac{\varphi_3+\varphi_4}{\varphi_4}$$
and thus $a_1=\alpha b_1$ or $ \left( \alpha+\frac{\varphi_4}{\varphi_2+\varphi_4} \right)b_1$ due to Lemma \ref{property of Gamma}. 
Therefore,  $	\mathrm{Det}(A_1)=0 $ if and only if $b_1=0$ or $a_1=\alpha b_1$ or $ \left( \alpha+\frac{\varphi_4}{\varphi_2+\varphi_4} \right)b_1$.
Similarly, $	\mathrm{Det}(B_1)=0 $ if and only if $a_1=0$ or $b_1=\alpha a_1$ or $ \left( \alpha+\frac{\varphi_4}{\varphi_2+\varphi_4} \right)a_1$.

It is easy to verify that $\mathrm{Det}(A_1)=0 $ and  $	\mathrm{Det}(B_1)=0 $ holds at the same time if and only if 
\begin{enumerate}[(i)]
	\item $ \alpha = 1, a_1= b_1$;
	\item $\alpha+\frac{\varphi_4}{\varphi_2+\varphi_4} = 1, a_1=b_1$;
	\item $\alpha\left( \alpha+\frac{\varphi_4}{\varphi_2+\varphi_4}   \right) = 1,  a_1=\alpha b_1$.
\end{enumerate}
If $\alpha+\frac{\varphi_4}{\varphi_2+\varphi_4} = 1$, then $ \frac{\varphi_2}{\varphi_4} = \frac{\alpha}{\alpha+1}.$ Recall that (\ref{2/4}) holds, namely,
$$\left(\frac{\varphi_2}{\varphi_4}\right)^2 = \frac{\varphi_2+\varphi_4}{\varphi_4}\alpha+\left( \frac{\varphi_2+\varphi_4}{\varphi_4}\alpha\right)^2.$$
Plugging $ \frac{\varphi_2}{\varphi_4} = \frac{\alpha}{\alpha+1}$ into the above equation and simplifying, we obtain $\alpha=1$, implying $\frac{\varphi_4}{\varphi_2+\varphi_4}=0$, which is impossible. If $\alpha\left( \alpha+\frac{\varphi_4}{\varphi_2+\varphi_4}   \right) = 1,$ then $\frac{\varphi_2}{\varphi_4}=\frac{\alpha^2+\alpha+1}{\alpha^2+1}=\frac{1}{\alpha+1}+\frac{1}{\alpha^2+1}+1,$ which is also impossible since $\tr_q\left(\frac{\varphi_2}{\varphi_4}\right)=0.$  Therefore, $\mathrm{Det}(A_1)=0 $ and  $	\mathrm{Det}(B_1)=0 $ holds at the same time if and only if $ \alpha = 1, a_1= b_1$, under which it is clear that $\mathrm{Im}_{V_i,(a_1,b_1)}=\mathrm{Im}_{V_i,(a_2,b_2)}$.

\medskip

Next, we consider the following two cases: 
\begin{enumerate}[(i)]
	\item $\mathrm{Det}(B_1)\neq0$;
	\item $\mathrm{Det}(A_1)\neq0$.
\end{enumerate} 

It is clear that $\mathrm{Im}_{V_i,(a_1,b_1)}=\mathrm{Im}_{V_i,(a_2,b_2)}$ if there exist some invertible matrix $P$ such that $PA_1=A_2$ and $PB_1=B_2$.

  As for (i), it suffices to show that 
\begin{equation}
\label{matrix1}
B_2B_1^{-1}A_1=A_2.
\end{equation}

After computing, we know that (\ref{matrix1}) is
\begin{eqnarray*}
	&& \begin{bmatrix}
		b_{21}b_{14}a_{11}+b_{21}b_{12}a_{13}+b_{22}b_{13}a_{11}+b_{22}b_{11}a_{13}, & b_{21}b_{14}a_{12}+b_{21}b_{12}a_{14}+b_{22}b_{13}a_{12}+b_{22}b_{11}a_{14} \\
		b_{23}b_{14}a_{11}+b_{23}b_{12}a_{13}+b_{24}b_{13}a_{11}+b_{24}b_{11}a_{13}, & b_{23}b_{14}a_{12}+b_{23}b_{12}a_{14}+b_{24}b_{13}a_{12}+b_{24}b_{11}a_{14}
	\end{bmatrix} \\
	&& =  \mathrm{Det}(B_1)   \begin{bmatrix}
		a_{21}, & a_{22} \\
		a_{23}, & a_{24}
	\end{bmatrix}. 
\end{eqnarray*} 

After complicated computation and simplification, we get 
\begin{equation*}	
\left\{
\begin{array}{lr}
b_{14}a_{11}+b_{12}a_{13}=\left(\varphi_1+\varphi_4\right)a_1^{2^i+1}+\varphi_3 a_1b_1^{2^i}+\left( \varphi_2+\varphi_4 \right) b_1^{2^i+1} \\
b_{14}a_{12}+b_{12}a_{14}= \left(\varphi_1+\varphi_4\right)a_1^{2^{i+1}} + \varphi_4 a_1^{2^i}b_1^{2^i} + \left(\varphi_1+\varphi_4\right)b_1^{2^{i+1}}  \\
b_{13}a_{11}+b_{11}a_{13}= \left(\varphi_2+\varphi_4\right)a_1^{2} + \varphi_4 a_1b_1 + \left(\varphi_2+\varphi_4\right)b_1^{2} \\
b_{13}a_{12}+b_{11}a_{14}= \left(\varphi_2+\varphi_4\right)a_1^{2^i+1}+\varphi_3 a_1^{2^i}b_1+\left( \varphi_1+\varphi_4 \right) b_1^{2^i+1}.
\end{array}
\right.
\end{equation*} 
Moreover, we have
\begin{enumerate}
	\item \begin{eqnarray*}
		\small
		&& b_{21}b_{14}a_{11}+b_{21}b_{12}a_{13}+b_{22}b_{13}a_{11}+b_{22}b_{11}a_{13}\\
		&=& \left(\varphi_3+\varphi_4\right)a_1^{2^i+2} + \left( \varphi_3\alpha^{2^i} + \frac{\varphi_3\left(\varphi_2+\varphi_4\right)}{\varphi_4}\beta +\frac{\left(\varphi_2+\varphi_4\right)^2\left(\varphi_1+\varphi_4\right)}{\varphi_4}   \right) a_1^2b_1^{2^i} \\ 
		&+& \left( \left(\varphi_2+\varphi_4\right)\alpha^{2^i}+\frac{\left(\varphi_2+\varphi_4\right)^2}{\varphi_4}\beta+\frac{\left(\varphi_3+\varphi_4\right)\left(\varphi_2+\varphi_4\right)\left(\varphi_1+\varphi_4\right)}{\varphi_4}  \right) a_1b_1^{2^i+1}\\
		&+&\left( \varphi_4\alpha + \left( \varphi_1+\varphi_4 \right)\beta +  \frac{\left(\varphi_2+\varphi_4\right)\left(\varphi_1+\varphi_4\right)^2}{\varphi_4} \right)a_1^{2^i+1}b_1+\left( \left(\varphi_2+\varphi_4\right)\alpha+\frac{\left(\varphi_2+\varphi_4\right)\left(\varphi_1+\varphi_4\right)}{\varphi_4}\beta  \right)a_1^{2^i}b_1^2\\
		&=&\left(\varphi_3+\varphi_4\right)a_1^{2^i+2}+\left(\varphi_1+\varphi_4\right)a_1^2b_1^{2^i}+\frac{\left(\alpha^2+1\right)\left(\varphi_2+\varphi_4\right)\left(\varphi_1+\varphi_4\right)}{\varphi_4} a_1b_1^{2^i+1} \\
		&+&\frac{(\alpha^4+\beta^2+1)\left(\varphi_2+\varphi_4 \right)}{\varphi_4}a_1^{2^i+1}b_1 + \frac{\left(\alpha^2+1\right)\left(\varphi_2+\varphi_4\right)^2}{\varphi_4}a_1^{2^i}b_1^2,
	\end{eqnarray*}
	\item \begin{eqnarray*}
		\small
		& & b_{21}b_{14}a_{12}+b_{21}b_{12}a_{14}+b_{22}b_{13}a_{12}+b_{22}b_{11}a_{14} \\
		&=& \left(\varphi_3+\varphi_4\right)a_1^{2^{i+1}+1} + \left( \varphi_4\alpha^{2^i} + \left(\varphi_2+\varphi_4\right)\beta +\frac{\left(\varphi_2+\varphi_4\right)^2\left(\varphi_1+\varphi_4\right)}{\varphi_4} \right)a_1^{2^i+1}b_1^{2^i}\\
		&+& \left( (\varphi_1+\varphi_4)\alpha^{2^i} + \frac{\left(\varphi_2+\varphi_4\right)\left(\varphi_1+\varphi_4\right)}{\varphi_4}\beta \right)a_1b_1^{2^{i+1}} \\
		&+& 	\left( \frac{\left(\varphi_2+\varphi_4\right)\left(\varphi_1+\varphi_4\right)^2}{\varphi_4} + \varphi_3\alpha +\frac{\varphi_3(\varphi_1+\varphi_4)}{\varphi_4}\beta  \right)a_1^{2^{i+1}}b_1\\
		&+&\left( \left(\varphi_2+\varphi_4\right)\left(\varphi_1+\varphi_4\right)+\left(\varphi_1+\varphi_4\right)\alpha+\frac{(\varphi_1+\varphi_4)^2}{\varphi_4}\beta + \frac{\varphi_3\left(\varphi_2+\varphi_4\right)\left(\varphi_1+\varphi_4\right)}{\varphi_4}  \right) a_1^{2^i}b_1^{2^{i}+1} \\
		&=& \left(\varphi_3+\varphi_4\right)a_1^{2^{i+1}+1} + \frac{ \left(\alpha^{2^{i+2}}+\beta^2+1\right)\left(\varphi_1+\varphi_4\right)}{\varphi_4}a_1^{2^i+1}b_1^{2^i}+ \frac{\left(\alpha^{2^{i+1}}+1\right)\left(\varphi_1+\varphi_4\right)^2}{\varphi_4}a_1b_1^{2^{i+1}}\\
		&+&\left(\varphi_2+\varphi_4\right)a_1^{2^{i+1}}b_1+\frac{\left(\alpha^{2^{i+1}}+1\right)\left(\varphi_1+\varphi_4\right)\left(\varphi_2+\varphi_4\right)}{\varphi_4}a_1b_1^{2^{i+1}},
	\end{eqnarray*}
	\item 
	\begin{eqnarray*}
		& & b_{23}b_{14}a_{11}+b_{23}b_{12}a_{13}+b_{24}b_{13}a_{11}+b_{24}b_{11}a_{13}\\
		&=& \frac{\left(\varphi_1+\varphi_2\right)^2\beta}{\varphi_4}a_1^{2^i+2}+\frac{\varphi_3\left(\varphi_1+\varphi_4\right)\beta}{\varphi_4}a_1^2b_1^{2^i} +\frac{\left(\varphi_1+\varphi_4\right)\left(\varphi_2+\varphi_4\right)\beta}{\varphi_4}a_1b_1^{2^i+1} \\
		&+& \left(\varphi_2+\varphi_4\right) \beta a_1^{2^i+1}b_1 + \frac{\left(\varphi_2+\varphi_4\right)^2\beta}{\varphi_4} a_1^{2^i}b_1^2
	\end{eqnarray*}
	\item 
	\begin{eqnarray*}
		&&b_{23}b_{14}a_{12}+b_{23}b_{12}a_{14}+b_{24}b_{13}a_{12}+b_{24}b_{11}a_{14} \\
		&=& \frac{\left(\varphi_1+\varphi_2\right)^2\beta}{\varphi_4}a_1^{2^{i+1}+1}+\left(\varphi_1+\varphi_4\right)\beta a_1^{2^i+1}b_1^{2^i} +\frac{\left(\varphi_1+\varphi_4\right)^2\beta}{\varphi_4}a_1b_1^{2^{i+1}} \\
		&+& \frac{\varphi_3\left(\varphi_2+\varphi_4\right)\beta}{\varphi_4} a_1^{2^{i+1}}b_1 + \frac{\left(\varphi_1+\varphi_4\right)\left(\varphi_2+\varphi_4\right)\beta}{\varphi_4} a_1^{2^i}b_1^{2^i+1}.
	\end{eqnarray*}
\end{enumerate} 

Furthermore, after computing and simplifying, we have 
\begin{enumerate}
	\item \begin{eqnarray*}
	&& \mathrm{Det}(B_1) a_{21} \\
	&=& \left(\varphi_3+\varphi_4\right)a_1^{2^i+2}+\left(\varphi_1+\varphi_4\right)a_1^2b_1^{2^i}+\frac{\left(\alpha^2+1\right)\left(\varphi_2+\varphi_4\right)\left(\varphi_1+\varphi_4\right)}{\varphi_4} a_1b_1^{2^i+1} \\
	&+&\frac{(\alpha^4+\beta^2+1)\left(\varphi_2+\varphi_4 \right)}{\varphi_4}a_1^{2^i+1}b_1 + \frac{\left(\alpha^2+1\right)\left(\varphi_2+\varphi_4\right)^2}{\varphi_4}a_1^{2^i}b_1^2,
	\end{eqnarray*}
\item \begin{eqnarray*}
	&& \mathrm{Det}(B_1) a_{22} \\
	&=& \left(\varphi_3+\varphi_4\right)a_1^{2^{i+1}+1} + \frac{ \left(\alpha^{2^{i+2}}+\beta^2+1\right)\left(\varphi_1+\varphi_4\right)}{\varphi_4}a_1^{2^i+1}b_1^{2^i}+ \frac{\left(\alpha^{2^{i+1}}+1\right)\left(\varphi_1+\varphi_4\right)^2}{\varphi_4}a_1b_1^{2^{i+1}}\\
	&+&\left(\varphi_2+\varphi_4\right)a_1^{2^{i+1}}b_1+\frac{\left(\alpha^{2^{i+1}}+1\right)\left(\varphi_1+\varphi_4\right)\left(\varphi_2+\varphi_4\right)}{\varphi_4}a_1b_1^{2^{i+1}},
\end{eqnarray*}
\item 
\begin{eqnarray*}
	&& \mathrm{Det}(B_1) a_{23} \\
&=& \frac{\left(\varphi_1+\varphi_2\right)^2\beta}{\varphi_4}a_1^{2^i+2}+\frac{\varphi_3\left(\varphi_1+\varphi_4\right)\beta}{\varphi_4}a_1^2b_1^{2^i} +\frac{\left(\varphi_1+\varphi_4\right)\left(\varphi_2+\varphi_4\right)\beta}{\varphi_4}a_1b_1^{2^i+1} \\
&+& \left(\varphi_2+\varphi_4\right) \beta a_1^{2^i+1}b_1 + \frac{\left(\varphi_2+\varphi_4\right)^2\beta}{\varphi_4} a_1^{2^i}b_1^2
\end{eqnarray*}
\item 
\begin{eqnarray*}
	&& \mathrm{Det}(B_1) a_{24} \\
	&=& \frac{\left(\varphi_1+\varphi_2\right)^2\beta}{\varphi_4}a_1^{2^{i+1}+1}+\left(\varphi_1+\varphi_4\right)\beta a_1^{2^i+1}b_1^{2^i} +\frac{\left(\varphi_1+\varphi_4\right)^2\beta}{\varphi_4}a_1b_1^{2^{i+1}} \\
&+& \frac{\varphi_3\left(\varphi_2+\varphi_4\right)\beta}{\varphi_4} a_1^{2^{i+1}}b_1 + \frac{\left(\varphi_1+\varphi_4\right)\left(\varphi_2+\varphi_4\right)\beta}{\varphi_4} a_1^{2^i}b_1^{2^i+1}.
\end{eqnarray*}
\end{enumerate}
Hence, it is clear that 
\begin{eqnarray*}
	&& \begin{bmatrix}
		b_{21}b_{14}a_{11}+b_{21}b_{12}a_{13}+b_{22}b_{13}a_{11}+b_{22}b_{11}a_{13}, & b_{21}b_{14}a_{12}+b_{21}b_{12}a_{14}+b_{22}b_{13}a_{12}+b_{22}b_{11}a_{14} \\
		b_{23}b_{14}a_{11}+b_{23}b_{12}a_{13}+b_{24}b_{13}a_{11}+b_{24}b_{11}a_{13}, & b_{23}b_{14}a_{12}+b_{23}b_{12}a_{14}+b_{24}b_{13}a_{12}+b_{24}b_{11}a_{14}
	\end{bmatrix} \\
	&& =  \mathrm{Det}(B_1)   \begin{bmatrix}
		a_{21}, & a_{22} \\
		a_{23}, & a_{24}
	\end{bmatrix}. 
\end{eqnarray*} 
and Eq. (\ref{matrix1}) holds.

As for (ii), we need to show that
\begin{equation}
\label{matrix2}
A_2A_1^{-1}B_1=B_2,
\end{equation}
whose proof can be obtained through just changing $a_1$ and $b_1$ in the proof of (\ref{matrix1}).

 Therefore, for any  $(a_1,b_1), (a_2,b_2)\in\gf_{q}^2\backslash\{(0,0)\}$ satisfying $B_{V_i,(a_1,b_1)}(a_2,b_2)=(0,0)$, $\mathrm{Im}_{V_i,(a_1,b_1)}=\mathrm{Im}_{V_i,(a_2,b_2)}$ holds and by Lemma \ref{quadratic_boomerang}, we know that the boomerang uniformity of $V_i$ is $4$.

\section{Conclusions}
\label{Conclusions}

In this paper, we construct permutations with boomerang uniformity $4$ from the butterfly structure. For the open butterfly, there seems to be no any experiment results about permutation with boomerang uniformity over $\gf_{q}^2$ with $q=2^3$ by MAGMA. As for the closed butterfly,  we provide a condition such that $V_i$ is a permutation over $\gf_{q}^2$ with  boomerang uniformity $4$, where  with $q=2^n$. Moreover, the experiment results by MAGMA over  $\gf_{q}^2$ with $q=2^3, 2^5$ show that our condition in Theorem \ref{main_theorem} such that $V_i$ is a permutation with boomerang uniformity $4$ over $\gf_{q}^2$ is also necessary. We give the following conjecture here and invite interested readers to solve it. 

\begin{Conj}
	Let $q=2^n$ with $n$ odd, $\gcd(i,n)=1$ and $V_i := (R_i(x,y), R_i(y,x))$ with $R_i(x,y)=(x+\alpha y)^{2^i+1}+\beta y^{2^i+1}$. 
	Then if $V_i$ is a permutation over $\gf_q^2$ with boomerang uniformity  $4$, we have $(\alpha,\beta)\in\Gamma$ defined by (\ref{Gamma}).
\end{Conj}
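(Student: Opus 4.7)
The plan is to prove the contrapositive: assume $(\alpha,\beta) \notin \Gamma$ and show that either $V_i$ fails to permute $\mathbb{F}_q^2$ or its boomerang uniformity strictly exceeds $4$. The first reduction is immediate: if $V_i$ is a permutation with boomerang uniformity $4$, then its differential uniformity is at most $4$, and by the Canteaut--Perrin--Roué result cited in Section~\ref{boomerang}, an APN $V_i$ can only exist for $n=3$, so the differential uniformity equals exactly $4$. By Lemma~\ref{differential_uniformity_butterfly}, this already forces $\beta \neq (\alpha+1)^{2^i+1}$, i.e.\ $\varphi_4 \neq 0$. Hence the entire burden of the proof is to derive the single algebraic identity $\varphi_2^{2^i} = \varphi_1 \varphi_4^{2^i-1}$ from the boomerang-uniformity-$4$ hypothesis.

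For this step, I would invoke Lemma~\ref{quadratic_boomerang}: since $V_i$ is a quadratic permutation of differential uniformity $4$, boomerang uniformity $4$ is equivalent to the subspace coincidence $\mathrm{Im}_{V_i,(a_1,b_1)}=\mathrm{Im}_{V_i,(a_2,b_2)}$ for every pair of nonzero $(a_1,b_1),(a_2,b_2)$ with $S_{V_i,(a_1,b_1)}(a_2,b_2)=(0,0)$. The equation system (\ref{B_V=0}) and the elimination leading to (\ref{a_2}) and (\ref{b2}) were computed in Section~\ref{boomerang} without using the hypothesis $(\alpha,\beta)\in\Gamma$, so exactly the same resultants give, for each $(a_1,b_1)$, a polynomial in $b_2$ whose nonzero roots correspond to the differential-uniformity-$4$ partners. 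The proposal is to specialise $(a_1,b_1)$ to a convenient reference point, such as $(a_1,0)$ with $a_1 \neq 0$, which makes the partner $(a_2,b_2)$ explicit as a rational function of $\alpha,\beta$; one then writes down the $2\times 2$ matrices $A_j,B_j$ of Section~\ref{boomerang} for both pairs and demands the existence of an invertible $P$ over $\mathbb{F}_q$ satisfying $PA_1=A_2$ and $PB_1=B_2$, paralleling the identities (\ref{matrix1})--(\ref{matrix2}).

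The resulting matrix identities yield a polynomial system in $\alpha, \beta$ which, after elimination, should reduce to the target relation $\varphi_2^{2^i}=\varphi_1\varphi_4^{2^i-1}$. To strip away extraneous factors coming from the special choice $b_1=0$, I would run the analogous calculation at a second, dually chosen reference $(0,b_1)$ (invoking the symmetry $R_i(y,x)$ used already in Proposition~\ref{relations}), and intersect the two constraint varieties. The identities of Lemma~\ref{property of Gamma}, applied in reverse, together with the factorisations (\ref{varphi_1+2+3+4}) and $\varphi_1^2+\varphi_2^2=\varphi_3(\varphi_3+\varphi_4)$, are the natural algebraic tools for carrying out this reduction.

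The main obstacle is controlling the degenerate loci where one of $\varphi_1+\varphi_4$, $\varphi_2+\varphi_4$, $\varphi_3+\varphi_4$, or $\varphi_3$ vanishes: on each such locus the elimination in the proof of Proposition~\ref{relations} breaks down, and a separate argument is needed to show that either $V_i$ is not a permutation (contradicting the hypothesis via a variant of Proposition~\ref{permutation_Vi}) or that a genuine pair of non-coinciding images $\mathrm{Im}_{V_i,(a_1,b_1)}\neq\mathrm{Im}_{V_i,(a_2,b_2)}$ exists, forcing the boomerang uniformity to be at least $6$. A further delicate point is that the polynomial system in $(\alpha,\beta)$ obtained from $PA_1=A_2$, $PB_1=B_2$ has degree in the Frobenius $2^i$, so its primary decomposition must be handled over an arbitrary odd-$n$ extension without passing to characteristic-$0$ Gröbner calculations; this is precisely why the experimental verification in Section~\ref{Conclusions} is limited to $n=3,5$, and why a complete proof of the conjecture appears to require a new structural idea rather than a direct computation.
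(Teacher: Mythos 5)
This statement is left \emph{open} in the paper: it appears only as a Conjecture, supported by MAGMA experiments for $n=3,5$, and the authors explicitly invite readers to solve it. So there is no proof in the paper to compare yours against, and your proposal does not close the gap either --- indeed your final paragraph concedes that ``a complete proof of the conjecture appears to require a new structural idea,'' which means what you have written is a research plan rather than a proof.

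Beyond that global point, two concrete steps in your outline would fail as stated. First, your ``first reduction'' invokes Lemma~\ref{differential_uniformity_butterfly} in the wrong direction: that lemma says $\beta\neq(\alpha+1)^{2^i+1}$ \emph{implies} differential uniformity at most $4$, so differential uniformity $4$ does not by itself force $\varphi_4\neq 0$. (The condition $\varphi_4\neq0$ can instead be recovered from the permutation hypothesis, since $\varphi_4=0$ gives $h(1)=0$ with $1\in\mu_{q+1}$, violating condition (2) of Proposition~\ref{permutation_Vi}; but that is a different argument from the one you give.) Second, and more seriously, the existence of an invertible $P$ with $PA_1=A_2$ and $PB_1=B_2$ is only a \emph{sufficient} condition for $\mathrm{Im}_{V_i,(a_1,b_1)}=\mathrm{Im}_{V_i,(a_2,b_2)}$ --- that is exactly how it is used in the forward direction in Section~\ref{boomerang}. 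Two $\mathbb{F}_2$-linear maps can have equal images without one being an invertible left multiple of the other (this would additionally require a kernel containment), so showing that no such $P$ exists when $\varphi_2^{2^i}\neq\varphi_1\varphi_4^{2^i-1}$ does not produce a pair with $\mathrm{Im}_{V_i,(a_1,b_1)}\neq\mathrm{Im}_{V_i,(a_2,b_2)}$, and hence does not force the boomerang uniformity above $4$ via Lemma~\ref{quadratic_boomerang}. To prove the conjecture you would need to exhibit, for each $(\alpha,\beta)\notin\Gamma$ with $V_i$ a permutation, either an explicit BCT entry of size at least $6$ or a genuine failure of image coincidence; your sketch never reaches that point, and the degenerate loci you flag at the end are precisely where such an argument would have to live.
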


Based on a private communication with M. Calderini, we look into the relation between the proposed quadratic permutation $V_i$ and the Gold function $x^{2^{2i}+1}$ characterized in \cite{boura2018boomerang}.  Let $q=2^n$ with $n$ odd. Let $L_1(x)=Ax^q+Bx, L_2(x)=Cx^q+Dx$ with $A,B,C,D\in\gf_{q}$ be permutations over $\gf_{q^2}$ and $$G_i(x)=L_2(L_1(x)^{2^i+I}),$$ where $\gcd(i,n)=1$, $I=1$ for $i$ even and $I=q$ for $i$ odd. For $i$ even, it is easy to obtain that $$G_i(x)=\epsilon_1x^{q\cdot(2^i+1)}+ \epsilon_2 x^{q\cdot 2^i+1} +  \epsilon_3x^{2^i+q} +  \epsilon_4 x^{2^i+1},$$
where 
 \begin{equation}
 \label{G}
\left\{
\begin{array}{lr}
\epsilon_1 = A^{2^{i}+1}D+B^{2^i+1}C\\
\epsilon_2= A^{2^i}BD+AB^{2^i}C \\
\epsilon_3= AB^{2^i}D+A^{2^i}BC\\
\epsilon_4=A^{2^i+1}C+B^{2^i+1}D.
\end{array}
\right.
\end{equation}
 Moreover,  after computing directly, we have $\varphi_2^{2^i}=\varphi_1\varphi_4^{2^i-1}$ and $\varphi_4\neq 0$, where $\varphi_j$ and $\epsilon_j$ are defined by (\ref{varphi}) and (\ref{G}), respectively. The case $i$ odd is similar. 
Furthermore, experimental results on $n=3, 5$ indicate that there exist some $A,B,C,D\in\gf_{q}$ such that $G_i(x)=V_i(x)$ for any $(\alpha,\beta)\in\Gamma$. In other word, the quadratic permutation $V_i$ seems to be affine equivalent to the Gold function. Therefore, unluckily, we may not obtain new permutations with boomerang uniformity $4$ from the butterfly structure. 


\bibliographystyle{IEEEtran}
\bibliography{ref}

\end{document}